\documentclass[11pt]{article}

\usepackage[margin=1in]{geometry}

\usepackage{microtype}
\usepackage{graphicx}
\usepackage{caption}
\usepackage{subcaption}
\usepackage{booktabs} % professional tables
\usepackage{pgfplots}
\usepgfplotslibrary{groupplots}
\pgfplotsset{compat=newest}
\usepackage{xcolor}

\usepackage{hyperref}
\usepackage{adjustbox}
\usepackage{xurl}
\usepackage{tikz}
\usetikzlibrary{shapes.geometric}

\usepackage{amsmath}
\usepackage{amssymb}
\usepackage{mathtools}
\usepackage{amsthm}

\usepackage[capitalize,noabbrev]{cleveref}

\usepackage[textsize=tiny]{todonotes}
\usepackage{natbib} 

\theoremstyle{plain}
\newtheorem{theorem}{Theorem}[section]
\newtheorem{proposition}[theorem]{Proposition}

\newtheorem{corollary}[theorem]{Corollary}
\theoremstyle{definition}
\newtheorem{definition}[theorem]{Definition}

\theoremstyle{remark}

\newcommand{\Prob}{\mathbb{P}}
\newcommand{\bx}{\mathbf{x}}

\newcommand{\bz}{\mathbf{z}}

\newcommand{\bd}{\mathbf{d}}
\newcommand{\bg}{\mathbf{g}}

\title{The Crowded Embedding Space: A Mean-Field Mechanism for\\
Emergent Marginalization in Retrieval-Augmented Agents}

\author{
  Shwan Ashrafi\thanks{Correspondence to \texttt{shwan.ashrafi@oracle.com}.}
  \quad Dan Roth \\
  Oracle AI
}

\date{}

\begin{document}
\maketitle

\begin{abstract}
Retrieval-augmented generative agents rely on retrieval for grounding, yet are typically evaluated on a query-by-query basis. This isolates interactions that are geometrically coupled in a shared embedding space. For example, we show that the high document density required to serve majority interests (e.g., generic ``Crime" movies) can geometrically overcrowd the retrieval neighborhood of a semantically similar minority (e.g., ``Film Noir"), effectively expelling minority content from top-$k$ results. We introduce a formal framework to analyze how such \emph{goal collisions} in dense retrieval induce fundamental performance limits and emergent fairness issues inherent to spatial crowding. In our static analysis, we demonstrate that for a fixed embedding space, a phase transition occurs where minority user goals suffer a catastrophic collapse in performance as the density of majority goals increases. We then extend this to a dynamic model and derive a non-linear Fokker-Planck equation that governs the evolution of document embeddings as the agent updates them to maximize retrieval accuracy. Our analysis reveals that this local relevance objective triggers an emergent global mechanism that systematically marginalizes minority interests. We prove that such objectives drive the system to self-organize into a state that exclusively serves majority interests. These results provide a theoretical foundation for understanding a critical grounding failure mode in retrieval-augmented agents.
\end{abstract}

\section{Introduction}
\begin{figure}[!t] 
  \centering
  \tikzset{every picture/.style={scale=0.2, every node/.style={transform shape}}}
  \hspace*{-3cm}\makebox[\textwidth][c]{\resizebox{\columnwidth}{!}{\input{figures/goal_collision_figure.tex}}}

  \caption{\textbf{The Geometry of Goal Collision.} Illustration of the interference mechanism for a minority query (green circle) seeking its target (orange square). As the density of the surrounding majority document population (blue points) increases, interfering documents (red points) statistically saturate the local neighborhood (green dashed circle). This geometric crowding effectively pushes the target out of the top-$k$ shortlist. We show that this local spatial saturation leads to a highly non-linear collapse in retrieval probability as majority density scales (Theorem~\ref{thm:static_budgeted}).}
  \label{fig::motivation}
  \vspace{-1em}
\end{figure}
Large language models (LLMs) are increasingly deployed as agents that interact with users and external systems, often relying on retrieval-augmented generation (RAG) to ground their responses in large, continuously evolving knowledge bases \citep{lewis2020rag, guu2020realm, ram2023hybrid, kishore2023incdsi}. Considerable progress has been made in building such systems which has resulted in improving retrieval accuracy, scaling vector search, and integrating reranking and feedback mechanisms. Yet, despite these advances, a fundamental understanding of how retrieval-augmented agents behave in realistic, multi-user environments remains critically lacking.

While the agent's final output may benefit from parametric memory or redundancy, the core premise of RAG is that reliable grounding requires the retrieval of specific, external evidence. The dominant evaluation paradigm for RAG systems assesses performance on isolated, independently sampled queries \citep{chen2024benchmarking}. This paradigm implicitly assumes that each query can be evaluated in isolation, independent of other users or queries. In practice, however, deployed agents are shared resources. A single retrieval system simultaneously serves a heterogeneous population, necessitating a shared index that represents the union of all diverse user interests. Retrieval outcomes are not only subject to algorithmic fairness constraints \citep{kim2025towards}, but are also driven by population dynamics, where users collectively shape which documents are retrieved and reinforced over time \citep{morik2020controlling}. As a result, the grounding capacity of any individual user query is no longer an isolated event, but is coupled, often implicitly, to the behavior of the broader user population.

At the core of every RAG system lies a high-dimensional embedding space in which both queries and documents are represented as vectors. Retrieval is performed by nearest-neighbor search in this space. While such representations are highly expressive, they are also subject to fundamental geometric constraints. As the number of distinct user goals grows, the finite dimensionality and effective anisotropy of embedding spaces guarantee that neighborhoods corresponding to different goals will increasingly overlap \citep{valeriani2023geometry}. We refer to this phenomenon as \emph{goal collision} (Figure \ref{fig::motivation}). When a user with a niche or low-frequency goal issues a query, their nearest neighbors are likely to include documents relevant to more frequent or popular goals that happen to occupy the same region of the embedding space \citep{liu2024lost}. For minority queries where redundancy is low and parametric knowledge is unreliable, this exclusion from the retrieval shortlist effectively creates a hard bottleneck for the agent. This contamination is not due to algorithmic error, but emerges endogenously from geometric crowding, where the high density of majority data saturates the retrieval space, imposing a global proximity scale that is too coarse to resolve the finer local separation of minority clusters.

Goal collision introduces an implicit population-level dependency. Specifically, a given user’s retrieval performance depends on the density of \emph{other} users; this occurs because distinct goals compete for finite retrieval capacity, where dense majority populations statistically exclude semantically adjacent minority targets from the shortlist by monopolizing the nearest-neighbor ranks. This phenomenon connects retrieval performance to broader concerns in fairness and exposure, but differs from classical fairness analyses in that it occurs even when the system faithfully learns the underlying population statistics \citep{singh2018fairness}.

In this paper, we argue that this implicit dependency is not merely a source of stochastic noise or variance, but a mechanism for \emph{emergent system-level behavior}. We develop a theoretical framework, grounded in mean-field theory \citep{mei2018meanfield}, to analyze the population-level consequences of goal collision in retrieval-augmented agents. Our framework treats user goals and document embeddings as interacting populations within a shared geometric space, allowing us to move beyond per-query analysis and reason about macroscopic system behavior.

Our contributions are twofold. 
First, we present a static analysis of retrieval in a fixed embedding space. We formally relate the local density of user goals to the probability of successful retrieval and show that this relationship is highly non-linear. In particular, we prove the existence of a sharp \emph{phase transition} in which, as the density of majority goals surpasses a critical threshold, the retrieval success rate for minority goals collapses catastrophically, rendering them effectively invisible to the agent. This collapse is not gradual, but abrupt, and is amplified in practical two-stage retrieval pipelines with finite shortlist budgets. Our analysis provides a theoretical explanation for empirically observed phenomena such as hubness and the systematic exclusion of low-density regions in embedding-based retrieval \citep{nielsen2023hubness}.

Second, we extend our analysis to the dynamic setting in which 
retrieval-augmented agents undergo retraining or fine-tuning of their retrieval components from user feedback, inducing an effective evolution of the document embedding distribution over time \citep{vu2024freshllms}. We model this learning process using a mean-field McKean--Vlasov equation that governs the macroscopic evolution of the document embedding density \citep{mignacco2020dynamical}. This formulation reveals an emergent mechanism by which standard accuracy-maximizing objectives trigger a global reallocation of representational capacity within the shared embedding space. We prove that the system self-organizes to exclusively serve majority interests, concentrating embeddings around dominant goal regions while dynamically erasing minority interests.

These results provide a theoretical foundation for understanding a critical failure mode of retrieval-augmented agents in shared environments. Our findings show that emergent unfairness can result from geometric constraints and feedback-driven learning dynamics, under standard accuracy-maximization objectives. This mechanism is analogous in spirit to model collapse in generative systems trained on their own outputs \citep{shumailov2024modelcollapse, dohmatob2024tale}, but operates through retrieval geometry and population dynamics rather than data recursion. In this paper, we lay the groundwork for principled detection and mitigation strategies in the next generation of agentic AI systems.

\section{The Static Model: Performance Limits from Goal Collision}
\label{sec::static}
We begin by analyzing a static setting in which both user goals and document embeddings are fixed. This abstraction isolates the geometric and population-level constraints inherent to embedding-based retrieval, allowing us to characterize fundamental performance limits that persist even in the absence of learning, feedback loops, or temporal effects. Our goal in this section is not to model a specific retrieval system in full detail, but to capture the dominant mechanisms by which crowding in a shared embedding space degrades retrieval performance for low-density user goals.

\subsection{Geometric Retrieval Model and Population Assumptions}
Let the system be defined by a $d$-dimensional embedding space $\mathcal{S} \subseteq \mathbb{R}^d$. A population of users is characterized by a set of $N$ goals $\{g_1, \dots, g_N\}$, which are mapped by an embedding function $\phi$ to points $\{\bg_1, \dots, \bg_N\} \subset \mathcal{S}$. We assume that $N$ is large and that individual goals are drawn from an underlying population distribution. Accordingly, we model the empirical distribution of embedded goals using a continuous density $\rho_G(\bx)$ over $\mathcal{S}$, normalized such that $\int_{\mathcal{S}} \rho_G(\bx)\, d\bx = 1$. The system’s knowledge base consists of $M$ documents $\{d_1, \dots, d_M\}$ with corresponding embeddings $\{\bd_1, \dots, \bd_M\} \subset \mathcal{S}$. For simplicity, we assume a one-to-one correspondence where each goal $g_i$ has a single correct target document $d_i$, and their embeddings are initially close, i.e., $\phi(g_i) \approx \phi(d_i)$. This assumption allows us to cleanly define retrieval success and isolate interference effects. Importantly, our analysis does not assume semantic uniqueness or exclusivity of documents; instead, it considers a minimal setting in which each goal is associated with a well-defined target. Extensions to multi-document relevance or soft relevance do not qualitatively alter the geometric mechanisms we describe.

\paragraph{Retrieval rule.}
We consider a simple $k=1$ nearest-neighbor search in the embedding space. A query corresponding to goal $g_i$, embedded at $\bg_i$, is successful if and only if its target document $d_i$ (embedded at $\bd_i$) is closer to $\bg_i$ than any other document embedding $\bd_j$, $j \neq i$. While real-world RAG systems often use approximate nearest-neighbor search and multi-stage pipelines, this minimal retrieval rule allows us to derive interpretable performance bounds. We later show that practical constraints such as shortlist truncation only amplify the effects identified here.

\begin{definition}[Success Probability]
The success probability for a query associated with a goal embedded at location $\bg$ is defined as the probability that its target document at $\bd$ is the nearest neighbor among all document embeddings $\mathcal{D} = \{\bd_j\}_{j=1}^M$:
\[
P_{\text{success}}(\bg) 
= \Prob\!\left( 
\arg\min_{\bd' \in \mathcal{D}} \|\bg - \bd'\|^2 = \bd 
\right).
\]
\end{definition}

\paragraph{Goal collision as geometric interference.}
Retrieval failure occurs when at least one non-target document embedding $\bd_j$ lies closer to the query $\bg$ than the target embedding $\bd$. Geometrically, this corresponds to the presence of an \emph{interfering document} within the closed ball centered at $\bg$ with radius $\|\bg - \bd\|$. We refer to such events as \emph{goal collisions}, since they occur when documents from other user goals occupy the same local neighborhood of the embedding space. In the large-population regime, it is natural to reason about collisions statistically. Let $\lambda_D(\bx)$ denote the document intensity function. 
In many retrieval systems—particularly those where document embeddings are learned or refreshed based on user feedback or demand—the spatial distribution of documents closely tracks the distribution of user goals. As a first-order approximation, we therefore assume $\lambda_{D}(\bx)\approx M\rho_{G}(\bx)$. In real-world static corpora (e.g., Wikipedia, the Web, or corporate knowledge bases), the supply of documents naturally mirrors historical human interest; mainstream topics organically accumulate larger volumes of written content than niche topics. Therefore, even a fixed snapshot of a corpus inherently possesses the skewed density distribution that triggers this geometric interference. This approximation is not exact, but it provides a tractable mean-field description that makes explicit how population density enters retrieval performance.

\paragraph{Mean-field approximation.}
Let $\epsilon = \|\bx - \bd\|$ denote the distance between a query location $\bx$ and its target document. For sufficiently small $\epsilon$, the volume of the $d$-dimensional ball of radius $\epsilon$ is
$
V_d(\epsilon) = \frac{\pi^{d/2}}{\Gamma(d/2 + 1)} \epsilon^d.
$
The expected number of interfering documents within this ball is approximately $\lambda_D(\bx) V_d(\epsilon)$. Modeling document locations as a Poisson point process (PPP) with this local intensity, a standard approximation in stochastic geometry, yields a closed-form expression for the probability that no interferers are present. Under this approximation, the success probability at location $\bx$ is given by
\begin{equation}
P_{\text{success}}(\bx) \approx \exp\!\left(- \lambda_D(\bx) V_d(\epsilon)\right).
\label{eq:static_success}
\end{equation}

The uniform PPP assumption provides a conservative analytical lower bound; however, real-world document distributions are typically highly clustered. As discussed in Appendix \ref{appx:ppp}, this clustering produces pockets of super-critical density that accelerate performance collapse at substantially lower majority population thresholds. Furthermore, Equation~\eqref{eq:static_success} assumes a standard Euclidean volume, whereas practical neural embeddings (e.g., from Transformers) exhibit strong anisotropy and collapse into a narrow sub-manifold, commonly referred to as the ``cone effect" (Appendix \ref{app:hubness}). By substantially reducing the space's usable capacity, these real-world geometric constraints make spatial crowding and collisions inevitable even in high-dimensional spaces. Equation~\eqref{eq:static_success} forms the foundation of our static analysis. It explicitly links a \emph{macroscopic} population property, the local density of user goals, to a \emph{microscopic} retrieval outcome for an individual query. It can be seen that the dependence is exponential and even modest increases in local goal density can induce sharp drops in retrieval success for nearby low-density goals. To illustrate the consequences of this exponential coupling, we consider a simple two-population setting in which a dense majority distribution coexists with a diffuse minority distribution. Even when the minority cluster is well separated and internally unchanged, growth in the majority population can induce a sharp collapse in minority retrieval success by setting a global collision scale, as illustrated by the Gaussian mixture example in Appendix~\ref{app:example1}. Related geometric effects such as hubness and embedding anisotropy are discussed in Appendix~\ref{app:hubness}.

\subsection{Theoretical Result: Static Performance Collapse}
The abstract geometric crowding described by the exponential relationship is greatly amplified in practical, two-stage retrieval systems. We consider a system where a fast, first-stage retriever (e.g., Approximate Nearest Neighbor) returns a fixed-size shortlist of $L$ candidates, which are then processed by a second-stage reranker. In this budgeted setting, a retrieval fails if the correct document is not in the shortlist, leading to a much sharper performance collapse.

\begin{theorem}[Minority Collapse via Shortlist Exclusion]
\label{thm:static_budgeted}
Let a retrieval system serve a goal distribution $\rho_G(\bx) = (1 - \alpha)\rho_{\text{maj}}(\bx) + \alpha\rho_{\text{min}}(\bx)$ using a two-stage process with a first-stage shortlist of size $L$. 
For a minority query at location $\bx \in \mathcal{S}_{\text{min}}$, let $V(\bx)$ be the volume of the retrieval ball containing its target document. 
% The expected number of majority interferers within this ball is $\Lambda_{\text{maj}}(\bx) = N_{\text{maj}} \int_{V(\bx)} \rho_{\text{maj}}(z) \, dz$.

Define
\[
I_{\min} := \inf_{\bx \in \mathcal{S}_{\text{min}}} \int_{V(\bx)} \rho_{\text{maj}}(z) \, dz,
\]
and let the critical majority population be
\[
N_c := \frac{L}{I_{\min}}.
\]
If $N_{\text{maj}} > N_c$, then for all $\bx \in \mathcal{S}_{\text{min}}$, the probability that the correct document is included in the shortlist decays super-polynomially in $N_{\text{maj}}$, causing the expected success probability $\mathbb{E}_{\bx \sim \rho_{\text{min}}}[P_{\text{success}}(\bx)]$ to collapse towards zero.
\end{theorem}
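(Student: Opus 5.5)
The plan is to reduce shortlist inclusion to a tail bound on the number of majority interferers. The interferers are modeled, as in the mean-field approximation leading to \eqref{eq:static_success}, as a Poisson point process; alternatively, the $N_{\text{maj}}$ majority embeddings can be treated as i.i.d.\ draws from $\rho_{\text{maj}}$.

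Fix a minority query at $\bx \in \mathcal{S}_{\text{min}}$. Its target document lies on the boundary of the retrieval ball $V(\bx)$. The first-stage retriever ranks by distance, so the target is excluded from the length-$L$ shortlist whenever at least $L$ non-target documents fall strictly inside $V(\bx)$. Hence
\[
\Prob(\text{target in shortlist}) \le \Prob\big(K(\bx) \le L-1\big),
\]
where $K(\bx)$ counts majority documents in $V(\bx)$. Discarding minority interferers only weakens this bound, so the inequality stands.

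Under the i.i.d.\ model, $K(\bx) \sim \mathrm{Bin}(N_{\text{maj}}, p(\bx))$ with $p(\bx) = \int_{V(\bx)} \rho_{\text{maj}}(z)\,dz \ge I_{\min}$. Under the PPP model, $K(\bx)$ is Poisson with mean $\Lambda(\bx) = N_{\text{maj}} p(\bx)$. Either way, the mean satisfies $\Lambda(\bx) \ge N_{\text{maj}} I_{\min}$, which exceeds $L$ exactly when $N_{\text{maj}} > N_c$.

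The core step is a lower-tail Chernoff bound. For a Poisson or binomial variable with mean $\Lambda > L$,
\[
\Prob(K \le L) \le \exp\!\big(-\Lambda\, h(L/\Lambda)\big), \qquad h(u) = u\log u - u + 1 .
\]
Alternatively, for Poisson one can write the bound directly as $\Prob(K\le L) \le e^{-\Lambda}\sum_{j\le L} \Lambda^j/j! \le (L+1)\, e^{-\Lambda}\Lambda^L$. With $\Lambda \ge N_{\text{maj}} I_{\min}$, the bound becomes $(L+1)\, e^{-N_{\text{maj}} I_{\min}}(N_{\text{maj}})^L$ up to constants. This is exponentially small in $N_{\text{maj}}$, and in particular super-polynomially small.

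Uniformity in $\bx$ comes from the infimum in $I_{\min}$. The bound uses only $\Lambda(\bx) \ge N_{\text{maj}} I_{\min}$ and the fact that $\Lambda \mapsto \Prob(K \le L)$ is decreasing for $\Lambda > L$. The result is a single bound $B(N_{\text{maj}}) \to 0$ valid for every $\bx \in \mathcal{S}_{\text{min}}$. Since success requires shortlist inclusion, $P_{\text{success}}(\bx) \le B(N_{\text{maj}})$, and integrating against $\rho_{\text{min}}$ gives $\mathbb{E}_{\bx\sim\rho_{\text{min}}}[P_{\text{success}}(\bx)] \le B(N_{\text{maj}}) \to 0$.

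The main obstacle is the regime just above threshold. When $N_{\text{maj}}$ barely exceeds $N_c$, the mean $\Lambda$ is only slightly larger than $L$, so the Chernoff exponent $\Lambda h(L/\Lambda)$ is small and there is no collapse yet. The statement therefore has to be read asymptotically: $L$ and $I_{\min}$ are held fixed while $N_{\text{maj}} \to \infty$, and the decay is super-polynomial in $N_{\text{maj}}$, not uniform at the threshold.

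A second subtlety is that $V(\bx)$ must not shrink as $N_{\text{maj}}$ grows. The target's distance is fixed by the minority geometry, since $\phi(g)\approx\phi(d)$ independently of the majority. I will state this as a standing assumption, together with $I_{\min} > 0$, which is implicit in $N_c$ being finite.
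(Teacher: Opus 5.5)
Your proposal is correct and follows essentially the same route as the paper's proof. Both arguments model the majority documents as a Poisson process with intensity $N_{\text{maj}}\rho_{\text{maj}}$, reduce shortlist inclusion to fewer than $L$ majority points landing in $V(\bx)$, apply the left-tail Poisson Chernoff bound once $\Lambda(\bx) \ge N_{\text{maj}} I_{\min} > L$, and integrate the resulting uniform bound $\exp(-N_{\text{maj}} I_{\min} + O(\ln N_{\text{maj}}))$ against $\rho_{\text{min}}$; your additions (uniformity via monotonicity in $\Lambda$, an inequality rather than an equivalence for inclusion, and the explicit caveat that the collapse is asymptotic with $V(\bx)$ fixed and $I_{\min}>0$) only make explicit what the paper assumes implicitly.
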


\begin{corollary}[Multi-Document Relevance]
\label{cor:doc_rel}
If a minority query has $T>1$ valid targets and success requires capturing at least one in the top-$L$ shortlist, the threshold $N_c$ increases and the phase transition is delayed, but success probability still decays exponentially for $N_{maj} > N_c$.
\end{corollary}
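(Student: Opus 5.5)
The plan is to reduce shortlist inclusion to a Poisson tail bound, using the mean-field Poisson point process model behind Equation~\eqref{eq:static_success}. Fix a minority query at $\bx \in \mathcal{S}_{\text{min}}$ with target at distance $\epsilon$, so the retrieval ball $V(\bx)$ is the closed ball of radius $\|\bx-\bd\|$. The first-stage retriever returns the $L$ nearest documents. The target is therefore in the shortlist if and only if at most $L-1$ non-target documents lie strictly inside this ball. Let $K(\bx)$ denote the number of such interferers and keep only the majority ones, which can only lower the inclusion probability. Under the PPP approximation with intensity $\lambda_{\text{maj}} = N_{\text{maj}}\rho_{\text{maj}}$, $K_{\text{maj}}(\bx)$ is Poisson with mean $\Lambda(\bx) = N_{\text{maj}}\int_{V(\bx)}\rho_{\text{maj}}(z)\,dz \ge N_{\text{maj}} I_{\min}$. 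This gives
\[
\Prob(\text{target in shortlist}) \le \Prob\bigl(K_{\text{maj}}(\bx) \le L-1\bigr) = e^{-\Lambda(\bx)}\sum_{j=0}^{L-1}\frac{\Lambda(\bx)^j}{j!}.
\]

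Next comes the phase-transition step. The condition $N_{\text{maj}} > N_c = L/I_{\min}$ says exactly that $\Lambda(\bx) > L$ uniformly in $\bx$, so the threshold $L-1$ lies below the mean. Two facts then give the bound. First, the Poisson lower tail $\Prob(\mathrm{Poi}(\Lambda)\le L-1)$ is decreasing in $\Lambda$, so we may replace $\Lambda(\bx)$ by $N_{\text{maj}} I_{\min}$. Second, a Chernoff bound gives $\Prob(\mathrm{Poi}(\Lambda)\le a) \le e^{-\Lambda}(e\Lambda/a)^a$ for $a<\Lambda$. Taking $a = L-1$ fixed, the bound becomes $C_L\,(N_{\text{maj}}I_{\min})^{L-1}e^{-N_{\text{maj}}I_{\min}}$, which is a polynomial times a decaying exponential in $N_{\text{maj}}$ and hence super-polynomially small, uniformly over $\mathcal{S}_{\text{min}}$. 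The per-query success probability is at most the inclusion probability, because the reranker cannot recover a missing target. Integrating against $\rho_{\text{min}}$ then gives the claimed collapse of $\mathbb{E}_{\bx\sim\rho_{\text{min}}}[P_{\text{success}}(\bx)]$. The uniform bound makes dominated convergence trivial.

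The main obstacle is modelling rather than computation. The statement treats $V(\bx)$, and hence $I_{\min}$, as fixed while $N_{\text{maj}}$ grows, so I would state explicitly that the target's distance $\epsilon$ is determined by the minority goal–document pairing and does not shrink with the majority population. I also need $I_{\min}>0$; if the infimum is zero, $N_c=\infty$ and the statement is vacuous. A second point needs care. The threshold $N_c$ is where the mean crosses $L$, but strong decay only kicks in once $\Lambda$ exceeds $L$ by a constant factor, since near $\Lambda\approx L$ the tail is about $1/2$. I would therefore phrase the conclusion asymptotically: for fixed $L$ and $N_{\text{maj}}\to\infty$ above $N_c$, the decay is super-polynomial, while at the threshold the transition has width $O(\sqrt{L})$ by the Poisson CLT, which is sharp relative to $L$.

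For Corollary~\ref{cor:doc_rel}, the same argument applies to the farthest-needed event. Success requires some target among the top $L$, so it suffices to bound the probability that the nearest of the $T$ targets, at distance $\epsilon_{(1)}\le\epsilon$, is excluded. This replaces $V(\bx)$ by the smaller ball, lowers $I_{\min}$, raises $N_c$, and leaves the same exponential tail above threshold.
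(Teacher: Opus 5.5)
Your proposal is correct and takes essentially the paper's route. The paper's proof also reduces multi-target success to inclusion of the nearest target: if that target is displaced by $L$ closer interferers, every farther target is displaced by the same ones. The smaller ball then lowers $I_{\min}$, raises $N_c = L/I_{\min}$, and inherits the Poisson--Chernoff tail from Theorem~\ref{thm:static_budgeted}. Your extra points (monotonicity of the Poisson lower tail in $\Lambda$, the need for $I_{\min}>0$, and the slow decay near $\Lambda\approx L$) are refinements the paper glosses over rather than a different argument.
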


Theorem 2.2 identifies a sharp geometric phase transition driven by the majority document population. Below the critical threshold $N_c$, the retrieval mechanism operates in a coexistence regime where minority targets remain distinguishable within the shortlist. However, once the majority document population exceeds $N_c$, the local embedding neighborhood becomes saturated with interferers. This saturation causes the minority inclusion probability to vanish super-polynomially, resulting in a catastrophic collapse of retrieval utility rather than gradual degradation. Theorem~\ref{thm:static_budgeted} establishes this behavior as a general bound. We complement this analysis with exact closed-form expressions for high-dimensional Gaussian mixtures in Appendix~\ref{app:example1}.

\section{The Dynamic Model: Emergent Marginalization}
\label{sec::dynamic_marg}
We now extend our analysis to the more realistic scenario
where the RAG agent learns from user interactions over
time. To map our formal derivations directly to practical RAG mechanics, consider that deployed systems increasingly use continuous index refreshing or fine-tuning based on user feedback. In this dynamic, standard gradient updates attempt to maximize local retrieval accuracy in a continuous, crowded space. We model this adaptive retrieval as a mean-field gradient flow, where accuracy-driven updates (the ``drift") compete against stochastic learning noise (the ``diffusion"). We will show that, in a shared multi-user environment, this learning process induces a mechanism of emergent unfairness. In particular, the system is driven toward a state of progressive representational collapse analogous to model autophagy in generative systems \citep{alemohammad2024selfconsuming}.

\subsection{Mean-Field Embedding Dynamics}

We assume that the population of user goals is stationary over the time horizon of interest and is described by a fixed density $\rho_G(\bx)$ on the embedding space $\mathcal{S} \subseteq \mathbb{R}^d$. However, document embeddings evolve as the agent learns from interactions. Their distribution is described by a time-varying density $\rho_D(\bx,t)$. This continuum approximation is appropriate when the number of documents is large and when updates are small but frequent.
\vspace{-0.5\baselineskip}
\paragraph{Learning as a gradient flow.}
We model the learning dynamics (evolution of $\rho_D$) as the Wasserstein gradient flow of a free energy functional $\mathcal{F}[\rho_D]$. This formulation is standard in mean-field analyses of large interacting systems and captures the aggregate effect of many local updates driven by user feedback. The main modeling choice is the decomposition of $\mathcal{F}$ into two competing components,
(i) an \emph{interaction potential} that pulls document embeddings toward regions frequently queried by users (maximizing relevance), and
(ii) an \emph{entropic regularization} that reflects stochasticity in learning, exploration, and noise inherent in stochastic gradient descent updates.

We define the free energy as
\[
\mathcal{F}[\rho_D]
=
\int_{\mathcal{S}}
\left(
\underbrace{K\,\rho_G(x)\,e^{-C\rho_D(\bx)}}_{\substack{\text{Interaction Potential} \\ \mathcal{V}[\rho_D]}}
+
\underbrace{D\,\rho_D(\bx)\ln\rho_D(\bx)}_{\substack{\text{Entropy} \\ \mathcal{E}[\rho_D]}}
\right)
\,d\bx .
\]

The first term $\mathcal{V}$, weighted by the learning rate $K$, encodes the agent’s retrieval objective, and $C \equiv M \cdot V_d(\epsilon)$ represents the effective system capacity. Regions of the embedding space with high goal density $\rho_G(\bx)$ correspond to frequently queried intents. Because the term decays with $\rho_D$, minimizing this potential drives the system to concentrate document density in these regions to maximize retrieval success. The second term $\mathcal{E}$ is an entropic regularizer weighted by $D$. It penalizes highly concentrated document distributions and promotes exploration or noise in the learning process. The relative magnitude of $K$ and $D$ controls the balance between exploitation and exploration. We define the ratio $\beta = CK/D$ as the effective ``learning pressure". In practical deployments, accuracy-driven updates typically dominate random SGD noise, driving the system into a high learning pressure limit ($\beta \to \infty$).

\paragraph{Evolution equation.}
The evolution of the document density $\rho_D(\bx,t)$ is governed by the gradient flow of this functional under the Wasserstein metric. This yields the continuity equation:
\begin{equation}
    \frac{\partial \rho_D}{\partial t}
    =
    \nabla \cdot \left( \rho_D \nabla \frac{\delta \mathcal{F}}{\delta \rho_D} \right).
    \label{eq:pde}
\end{equation}

By expanding the functional derivative $\frac{\delta \mathcal{F}}{\delta \rho_D}$, we recover the familiar drift–diffusion (Fokker--Planck) form. The variation of the entropy term yields the diffusion, while the variation of the potential yields the drift:
\[
\begin{aligned}
\frac{\partial \rho_D}{\partial t} 
&= \nabla \cdot \left( \rho_D \nabla \left( \frac{\delta \mathcal{V}}{\delta \rho_D} + D(\ln \rho_D + 1) \right) \right) \\
&= -\nabla \cdot (\mathbf{v}_{\mathrm{drift}} \rho_D) + D \nabla^2 \rho_D,
\end{aligned}
\]
where the effective drift velocity is $\mathbf{v}_{\mathrm{drift}} = -\nabla \frac{\delta \mathcal{V}}{\delta \rho_D}$. This equation is non-linear because the drift depends explicitly on $\rho_D$ through the interaction potential.

\subsection{Theoretical Result: Emergent Marginalization}

We now characterize the long-term behavior of the learning dynamics by analyzing the steady-state solutions of the McKean--Vlasov equation \eqref{eq:pde}, i.e., distributions $\rho_D^*$ satisfying $\partial \rho_D / \partial t = 0$. We show that the interaction between population skew and feedback-driven learning leads to an equilibrium where minority user goals are strictly excluded.

\begin{theorem}[Emergent Marginalization] \label{thm:marginalization}
Let the goal distribution $\rho_G(\bx)$ have multiple disjoint local maxima (representing different topics or goal clusters). Assume that document embeddings evolve according to the gradient flow defined by Eq.~\eqref{eq:pde}, and that the learning rate $K$ be sufficiently large compared to the diffusion rate $D$. Then, as $t \to \infty$, the document density $\rho_D(\bx,t)$ converges to a thresholded equilibrium:
\[
\rho_D^*(\bx) = \max \left( 0, \frac{1}{C} \ln \left( \frac{\rho_G(\bx)}{\tau} \right) \right)
\]
where $\tau$ is a system-dependent critical threshold. Consequently, for any minority user group where the goal density $\rho_G(\bx) \le \tau$, the allocated document density is strictly zero, rendering these goals effectively invisible to the agent.
\end{theorem}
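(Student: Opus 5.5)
The plan is to characterize the stationary points of the gradient flow \eqref{eq:pde} through the first variation of $\mathcal{F}$, pass to the high learning-pressure regime $\beta = CK/D \to \infty$, and then argue that the flow selects the unique minimizer of the resulting functional. Two steps are needed: identifying the equilibrium (an Euler--Lagrange/KKT computation) and showing that $\rho_D(\cdot,t)$ actually converges to it.

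\emph{Step 1 (stationarity).} Since \eqref{eq:pde} is a continuity equation, $\partial_t\rho_D = 0$ and the dissipation identity
\[
\frac{d}{dt}\mathcal{F}[\rho_D] = -\int \rho_D \Big|\nabla \tfrac{\delta\mathcal{F}}{\delta\rho_D}\Big|^2 d\bx
\]
together force $\nabla \frac{\delta\mathcal{F}}{\delta\rho_D} = 0$ on $\operatorname{supp}\rho_D^*$. I will assume each connected piece of the support is connected to the others by diffusion or mass exchange, and equivalently view $\rho_D^*$ as a minimizer of $\mathcal{F}$ over probability densities. Then $\frac{\delta\mathcal{F}}{\delta\rho_D}$ equals a single Lagrange multiplier $\mu$ on the support and is $\ge \mu$ off it. Computing the variation gives
\[
\frac{\delta\mathcal{F}}{\delta\rho_D} = -KC\,\rho_G\,e^{-C\rho_D} + D(\ln\rho_D + 1).
\]

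\emph{Step 2 (the $\beta\to\infty$ limit).} I divide by $KC$ and send $D/(KC) = 1/\beta \to 0$. The entropic contribution vanishes wherever $\rho_D$ stays bounded away from $0$ and $\infty$. On the support the condition becomes $\rho_G e^{-C\rho_D^*} = \tau$, with $\tau := -\mu/(KC)$, which gives
\[
\rho_D^* = \frac{1}{C}\ln\frac{\rho_G}{\tau}.
\]
Off the support we have $\rho_D = 0$, and the inequality condition reads $-KC\rho_G \ge \mu$, i.e. $\rho_G \le \tau$. Combining the two cases yields exactly $\rho_D^* = \max\!\big(0, \tfrac{1}{C}\ln(\rho_G/\tau)\big)$.

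The threshold is fixed by normalization, $\int \max(0, \tfrac1C\ln(\rho_G/\tau))\,d\bx = 1$. The left side is continuous and strictly decreasing in $\tau$ on $(0, \sup\rho_G)$, so $\tau$ exists and is unique. The marginalization claim then follows immediately: wherever $\rho_G \le \tau$, the density $\rho_D^*$ vanishes. The hypothesis of multiple disjoint maxima is only needed to make it nontrivial that some local cluster lies entirely below $\tau$.

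To make this limit rigorous, I would prove that the minimizers $\rho_\beta$ of $\mathcal{F}/(KC)$ converge in $L^1$ to the minimizer of the limiting functional
\[
\int \tfrac1C\,\rho_G\,e^{-C\rho}\,d\bx
\]
under the mass constraint. The tool is a $\Gamma$-convergence argument, using that $\int\rho\ln\rho$ is bounded below on compact $\mathcal{S}$. Note that for finite $\beta$ the support is full, since the logarithm forbids zeros, so ``strictly zero'' holds only in the limit. For finite $\beta$ the statement becomes an $O(e^{-c\beta})$-small density off the support.

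\emph{Step 3 (convergence as $t\to\infty$).} Both integrands are convex in $\rho$: $\rho \mapsto \rho_G e^{-C\rho}$ and $\rho \mapsto \rho\ln\rho$. Hence $\mathcal{F}$ is strictly convex in the flat (linear) sense, and its minimizer is unique. Along the flow $\mathcal{F}$ is nonincreasing and bounded below. A LaSalle-type argument then shows that $\omega$-limit points have zero dissipation, so they are stationary. The main obstacle is that $\mathcal{F}$ is not displacement convex in $W_2$, because the interaction term is not. Stationary points of the Wasserstein flow with disconnected support could therefore differ from the global minimizer: mass might be trapped in a cluster at the wrong level of $\mu$.

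My first attempt to rule this out is to use that, for finite $\beta$, the diffusion keeps $\rho_D > 0$ everywhere. This makes the support connected, so $\nabla\frac{\delta\mathcal{F}}{\delta\rho_D}=0$ forces one global $\mu$, and stationarity coincides with the flat first-order condition. Flat convexity then gives uniqueness, and so convergence to the unique stationary point. The final step is to take the limit $\beta \to \infty$ of these equilibria, as in Step 2.
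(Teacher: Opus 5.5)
Your proposal is correct. Steps 1--2 are the paper's proof. The paper asserts that the steady state minimizes $\mathcal{F}$ under the mass constraint, writes $-CK\rho_G e^{-C\rho_D}+D(\ln\rho_D+1)=\mu$, drops the entropic term as $D\to 0$, solves for $\rho_D^*$, sets $\tau=-\mu/(CK)$, and clamps to zero by appeal to non-negativity. It never fixes $\tau$, never justifies the clamp beyond non-negativity, and says nothing about convergence as $t\to\infty$.

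You go beyond this in three ways:
\begin{itemize}
\item You derive the zero set from the KKT inequality $-KC\rho_G\ge\mu$ off the support. Together with flat convexity of the limit functional, this is what shows the truncated profile is the constrained minimizer, not just a truncated critical point.
\item You fix $\tau$ by normalization, with existence and uniqueness.
\item You give a real time-convergence argument at fixed finite $\beta$. Your positivity step there does work. Zero dissipation means zero flux, $(D+KC^2\rho_G\,\rho e^{-C\rho})\nabla\rho=KC\rho e^{-C\rho}\nabla\rho_G$, so $|\nabla\ln\rho|\le\beta|\nabla\rho_G|$ and $\rho$ cannot vanish on a connected $\mathcal{S}$. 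The first variation is then a single global constant, and strict flat convexity identifies the unique stationary state with the minimizer.
\end{itemize}

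What your route buys is a statement that is actually true. As you note, for finite $\beta$ the equilibrium has full support, exponentially small in $\beta$ where $\rho_G<\tau$. So ``strictly zero'' holds only in the limit.

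At $D=0$ the flow becomes the degenerate, porous-medium-type equation $\partial_t\rho=\nabla\cdot\bigl(KC\rho_G\,\rho e^{-C\rho}\,\nabla(C\rho-\ln\rho_G)\bigr)$. Its stationary states include profiles $\frac1C\ln(\rho_G/\tau_i)$ with different thresholds $\tau_i$ on disjoint components, so mass can stay trapped in a minority cluster. The paper's starting assumption, that the steady state is the global minimizer, is therefore exactly what needs $D>0$. The theorem holds as the iterated limit $\lim_{\beta\to\infty}\lim_{t\to\infty}$, which is the precise form of the paper's informal remark about metastable coexistence.

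The price is technical overhead that the formal computation skips: well-posedness, precompactness of trajectories for LaSalle, and compactness of the minimizers $\rho_\beta$ for the $\Gamma$-limit. On the last point, energy comparison only bounds their entropy by $O(\beta)$. The Euler--Lagrange equation does better: it gives $\rho_\beta\le\max\bigl(e^{-1},\tfrac1C\ln(\sup\rho_G/\tau_\beta)\bigr)$ with $\tau_\beta=-\mu_\beta/(KC)$, and normalization keeps $\tau_\beta$ bounded away from zero.
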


Theorem~\ref{thm:marginalization} formalizes a mechanism of emergent marginalization. It proves that the agent, by following a locally optimal learning rule, reconfigures its knowledge base to serve only those user groups that exceed a critical viability threshold. While popular goals enjoy logarithmic representation, minority interests falling below this threshold are not merely underserved; they are actively and dynamically erased from the agent's operational capabilities. From a system-level perspective, this equilibrium represents a global minimum of the free energy functional $\mathcal{F}[\rho_D]$ where the agent optimizes for high-frequency goals while shedding capacity for low-frequency ones. This endogenous exclusion occurs without sampling bias or adversarial objectives, driven solely by the interaction between geometric crowding and the accuracy-maximizing update rule. We emphasize that while diffusion may sustain metastable coexistence in the short term, the asymptotic behavior in the high-learning-pressure regime is defined by this hard representational cutoff, leading to the delayed but irreversible collapse of minority embeddings observed in Section~\ref{sec::marg}.
\section{Experiments}
\label{sec:experiments}
We evaluate the theoretical predictions of Sections~2 and~3 through a combination of controlled synthetic simulations and empirical studies on real-world datasets. All experiments are designed to isolate population-induced geometric effects in fixed embedding spaces and to examine how standard learning dynamics amplify these effects over time. Our evaluation proceeds in two stages. First, we validate the existence and sharpness of the static phase transition predicted by Theorem~\ref{thm:static_budgeted} under controlled conditions. Second, we demonstrate that the same failure mode emerges consistently across text, vision, and narrative domains, and that learning dynamics governed by Theorem~\ref{thm:marginalization} transform this static fragility into irreversible marginalization.

\subsection{Experimental Methodology}
We use two pretrained models: (i)~\texttt{all-MiniLM-L6-v2} for text encoding (384 dimensions), and (ii)~\texttt{clip-ViT-B-32} for image encoding (512 dimensions). All embeddings are $L_2$-normalized to enable cosine similarity via dot product.
For each experiment, we fix a small minority cluster $N_{\text{min}}$, and progressively increase the size of an interfering majority population $N_{\text{maj}}$. We measure retrieval performance using Recall@$k$ for $k \in \{1, 2, 5, 10\}$, defined as the fraction of minority queries for which at least one relevant minority document appears in the top-$k$ retrieved results. To ensure statistical robustness, all experiments are repeated across multiple random seeds, and we report mean performance with 95\% confidence intervals. The experimental protocol follows a leave-one-out consistency check in which, for each minority query embedding, we evaluate whether the retrieval system returns other minority documents before majority interferers. This design isolates geometric crowding effects by holding the minority cluster structure fixed while varying only the density of the surrounding majority population.

\subsection{Synthetic Validation of the Phase Transition}
\begin{figure}[t] 
  \centering
  \tikzset{every picture/.style={yscale=0.92}}
  \resizebox{0.6\columnwidth}{!}{% This file was created with tikzplotlib v0.10.1.post13.
\begin{tikzpicture}

\definecolor{darkgray176}{RGB}{176,176,176}
\definecolor{darkred}{RGB}{139,0,0}
\definecolor{green}{RGB}{0,128,0}
\definecolor{steelblue}{RGB}{70,130,180}

\begin{axis}[
legend cell align={left},
legend style={font=\tiny, fill opacity=0.8, draw opacity=1, text opacity=1, draw=none},
tick pos=left,
x grid style={darkgray176},
xlabel={Number of Majority Documents (\(\displaystyle N_{\text{maj}}\))},
xlabel style={font=\footnotesize},
% xmajorgrids,
xmin=92.7, xmax=1265.3,
xtick style={color=black},
y grid style={darkgray176},
ylabel={Minority Success Probability},
ylabel style={font=\footnotesize}, 
% ymajorgrids,
ymin=-0.05, ymax=1.05,
ytick style={color=black}
]
\path [draw=steelblue, draw opacity=0.6, semithick]
(axis cs:146,0.99999804)
--(axis cs:146,1);

\path [draw=steelblue, draw opacity=0.6, semithick]
(axis cs:172,0.99999804)
--(axis cs:172,1);

\path [draw=steelblue, draw opacity=0.6, semithick]
(axis cs:198,0.99999804)
--(axis cs:198,1);

\path [draw=steelblue, draw opacity=0.6, semithick]
(axis cs:224,0.997563591734498)
--(axis cs:224,0.999636408265502);

\path [draw=steelblue, draw opacity=0.6, semithick]
(axis cs:250,0.993284214545624)
--(axis cs:250,0.997115785454376);

\path [draw=steelblue, draw opacity=0.6, semithick]
(axis cs:276,0.984757115307607)
--(axis cs:276,0.990842884692393);

\path [draw=steelblue, draw opacity=0.6, semithick]
(axis cs:302,0.9689397668782)
--(axis cs:302,0.9778602331218);

\path [draw=steelblue, draw opacity=0.6, semithick]
(axis cs:328,0.944593334485357)
--(axis cs:328,0.956606665514643);

\path [draw=steelblue, draw opacity=0.6, semithick]
(axis cs:354,0.899988609389126)
--(axis cs:354,0.916011390610874);

\path [draw=steelblue, draw opacity=0.6, semithick]
(axis cs:380,0.839486210724208)
--(axis cs:380,0.859313789275792);

\path [draw=steelblue, draw opacity=0.6, semithick]
(axis cs:406,0.764647182057212)
--(axis cs:406,0.787752817942788);

\path [draw=steelblue, draw opacity=0.6, semithick]
(axis cs:432,0.679203256214177)
--(axis cs:432,0.704796743785823);

\path [draw=steelblue, draw opacity=0.6, semithick]
(axis cs:458,0.582398560264443)
--(axis cs:458,0.609601439735557);

\path [draw=steelblue, draw opacity=0.6, semithick]
(axis cs:484,0.48414081796353)
--(axis cs:484,0.51185918203647);

\path [draw=steelblue, draw opacity=0.6, semithick]
(axis cs:510,0.386818463169847)
--(axis cs:510,0.413981536830153);

\path [draw=steelblue, draw opacity=0.6, semithick]
(axis cs:536,0.304893637077348)
--(axis cs:536,0.330706362922652);

\path [draw=steelblue, draw opacity=0.6, semithick]
(axis cs:562,0.228555116901007)
--(axis cs:562,0.252244883098993);

\path [draw=steelblue, draw opacity=0.6, semithick]
(axis cs:588,0.166615994192594)
--(axis cs:588,0.187784005807406);

\path [draw=steelblue, draw opacity=0.6, semithick]
(axis cs:614,0.115445526718112)
--(axis cs:614,0.133754473281888);

\path [draw=steelblue, draw opacity=0.6, semithick]
(axis cs:640,0.0770780497186786)
--(axis cs:640,0.0925219502813214);

\path [draw=steelblue, draw opacity=0.6, semithick]
(axis cs:666,0.0522792028325365)
--(axis cs:666,0.0653207971674635);

\path [draw=steelblue, draw opacity=0.6, semithick]
(axis cs:692,0.0334470422812057)
--(axis cs:692,0.0441529577187944);

\path [draw=steelblue, draw opacity=0.6, semithick]
(axis cs:718,0.019940494738212)
--(axis cs:718,0.028459505261788);

\path [draw=steelblue, draw opacity=0.6, semithick]
(axis cs:744,0.0128795169858669)
--(axis cs:744,0.0199204830141331);

\path [draw=steelblue, draw opacity=0.6, semithick]
(axis cs:770,0.0086319826717487)
--(axis cs:770,0.0145680173282513);

\path [draw=steelblue, draw opacity=0.6, semithick]
(axis cs:796,0.00485648249607561)
--(axis cs:796,0.00954351750392439);

\path [draw=steelblue, draw opacity=0.6, semithick]
(axis cs:822,0.00272436294854255)
--(axis cs:822,0.00647563705145745);

\path [draw=steelblue, draw opacity=0.6, semithick]
(axis cs:848,0.000761627390483785)
--(axis cs:848,0.00323837260951622);

\path [draw=steelblue, draw opacity=0.6, semithick]
(axis cs:874,0.000123899731765821)
--(axis cs:874,0.00187610026823418);

\path [draw=steelblue, draw opacity=0.6, semithick]
(axis cs:900,0)
--(axis cs:900,0.000954260831017311);

\path [draw=steelblue, draw opacity=0.6, semithick]
(axis cs:926,0)
--(axis cs:926,0.000954260831017311);

\path [draw=steelblue, draw opacity=0.6, semithick]
(axis cs:952,0)
--(axis cs:952,0.000591960798039804);

\path [draw=steelblue, draw opacity=0.6, semithick]
(axis cs:978,0)
--(axis cs:978,1.96e-06);

\path [draw=steelblue, draw opacity=0.6, semithick]
(axis cs:1004,0)
--(axis cs:1004,0.000591960798039804);

\path [draw=steelblue, draw opacity=0.6, semithick]
(axis cs:1030,0)
--(axis cs:1030,1.96e-06);

\path [draw=steelblue, draw opacity=0.6, semithick]
(axis cs:1056,0)
--(axis cs:1056,1.96e-06);

\path [draw=steelblue, draw opacity=0.6, semithick]
(axis cs:1082,0)
--(axis cs:1082,1.96e-06);

\path [draw=steelblue, draw opacity=0.6, semithick]
(axis cs:1108,0)
--(axis cs:1108,1.96e-06);

\path [draw=steelblue, draw opacity=0.6, semithick]
(axis cs:1134,0)
--(axis cs:1134,1.96e-06);

\path [draw=steelblue, draw opacity=0.6, semithick]
(axis cs:1160,0)
--(axis cs:1160,1.96e-06);

\path [draw=steelblue, draw opacity=0.6, semithick]
(axis cs:1186,0)
--(axis cs:1186,1.96e-06);

\path [draw=steelblue, draw opacity=0.6, semithick]
(axis cs:1212,0)
--(axis cs:1212,1.96e-06);

\addplot [line width=1pt, darkred]
table {%
146 0.999997635897588
172 0.999971819678879
198 0.999799561838347
224 0.999025484340771
250 0.996459388458819
276 0.989774871993439
302 0.975454946717807
328 0.949358470364539
354 0.907843236459206
380 0.849018171682544
406 0.773579584500132
432 0.684882645700515
458 0.588255737594688
484 0.489862331946305
510 0.395521580875653
536 0.309818959341746
562 0.23566216011073
588 0.174264940659987
614 0.125433293530006
640 0.0879960842365548
666 0.0602459077463126
692 0.040305036348475
718 0.0263810657884964
744 0.0169135759118793
770 0.0106333520004491
796 0.00656220939931232
822 0.00397924510112534
848 0.00237312424640158
874 0.00139309907541458
900 0.000805625737882785
926 0.000459303257711151
952 0.000258335551625084
978 0.000143440208263057
1004 7.86728358469308e-05
1030 4.2647521181243e-05
1056 2.28617731031915e-05
1082 1.21252443960755e-05
1108 6.36559973501169e-06
1134 3.30938722270556e-06
1160 1.7044915514343e-06
1186 8.70061490904327e-07
1212 4.40323358961475e-07
};
\addlegendentry{Theory}
\addplot [thick, green, opacity=0.7, dashed]
table {%
489.7676926852 -0.05
489.7676926852 1.05
};
\addlegendentry{\(\displaystyle N_c\) = 490}
\addplot [semithick, steelblue, opacity=0.6, mark=*, mark size=2, mark options={solid}, only marks]
table {%
146 1
172 1
198 1
224 0.9986
250 0.9952
276 0.9878
302 0.9734
328 0.9506
354 0.908
380 0.8494
406 0.7762
432 0.692
458 0.596
484 0.498
510 0.4004
536 0.3178
562 0.2404
588 0.1772
614 0.1246
640 0.0848
666 0.0588
692 0.0388
718 0.0242
744 0.0164
770 0.0116
796 0.0072
822 0.0046
848 0.002
874 0.001
900 0.0004
926 0.0004
952 0.0002
978 0
1004 0.0002
1030 0
1056 0
1082 0
1108 0
1134 0
1160 0
1186 0
1212 0
};
\addlegendentry{Empirical (±95\% CI)}
\end{axis}

\end{tikzpicture}}
\caption{\textbf{Phase Transition.} Comparison of theoretical prediction (solid red line, Eq.~\eqref{eq:static_success}) against empirical simulation (grey dots) for minority retrieval success. As the number of interfering majority documents $N_{\text{maj}}$, exceeds the critical threshold $N_c$ (vertical dashed line), minority performance undergoes a catastrophic collapse. This confirms that geometric crowding acts as a hard constraint on retrieval capacity.}
\label{fig:synthetic_validation}
\vspace{-1em}
\end{figure}

We begin with a synthetic setting that directly instantiates the assumptions of Section~\ref{sec::static}. Majority documents are sampled from an isotropic Gaussian $\mathcal{N}(0, \sigma^2I)$ in $\mathbb{R}^d$ with $d=2$ and $\sigma=1$. The minority query is placed at distance $1.5\sigma$ from the majority center, with its target at distance $\epsilon=0.5$, and shortlist size $L=20$. Figure~\ref{fig:synthetic_validation} summarizes the results. As the size of the majority population $N_{\text{maj}}$ increases, the probability that at least $L$ interfering documents enter the retrieval ball of a minority query transitions sharply from near zero to near one. This induces a sudden collapse in retrieval success once the critical threshold $N_c$ predicted by Theorem~\ref{thm:static_budgeted} is exceeded. The empirical success probability closely follows the theoretical prediction, which shows that the collapse manifests as a sharp phase transition. We further observe that increasing the embedding dimensionality delays this transition but does not remove it, consistent with the closed-form analysis in Proposition~\ref{prop:success_prob}.

\subsection{Universality of Phase Transition}
We next evaluate whether the static phase transition persists in realistic embedding spaces produced by pretrained representation models. All experiments use fixed embeddings without retraining or fine-tuning, and retrieval is performed via exact nearest-neighbor search using cosine similarity. Across all datasets, we use a consistent experimental protocol in which a small, fixed minority cluster is exposed to increasing interference from a semantically or perceptually similar majority population. The internal structure of the minority cluster is held constant, and only the density of the majority population is varied. This setup isolates the population-level geometric effects predicted by our theory from confounding factors such as model quality or minority cluster coherence.

\paragraph{Semantic Ambiguity (Text).}
We use the Quora Question Pairs (QQP) dataset~\citep{quora2017} to model semantic interference. The dataset contains pairs of questions labeled as duplicates or non-duplicates. We sample $N_{\text{min}} = 100$ true duplicate pairs from the training split as the minority group, treating the first question as the query and the second as the target. The interfering population is constructed from up to 150,000 non-duplicate question pairs, which yields approximately 200,000--250,000 candidate interferers after deduplication. We exclude exact matches to minority targets to prevent accidental true positives. As shown in Figure~\ref{fig:thm1-quora}, Recall@1 for the minority remains near-perfect at low interference but collapses once $N_{\text{maj}} > 25000$. At $N_{\text{maj}} \approx 200{,}000$, Recall@1 drops below $0.6$.

\paragraph{Perceptual Crowding (Vision).}
To test modality independence, we evaluate image retrieval using CLIP (ViT-B/32) on CIFAR-100~\citep{krizhevsky2009learning}. We select ``otter" as the minority class ($N_{\text{min}} = 50$) and use a leave-one-out protocol to check if the nearest neighbor is also an otter. The interfering population is constructed from $24$ visually similar mammal classes, which results in approximately $12{,}000$ images that are randomly shuffled. Figure~\ref{fig:thm1-clip} shows a sharp degradation in Recall@1 dropping from near-perfect to below $0.3$ as $N_{\text{maj}}$ approaches 12,000.

\paragraph{Narrative Style Preservation.}
We evaluate stylistic retrieval using the Wikipedia Movie Plots dataset~\citep{robischon2018wikipedia}, which contains movie plot descriptions annotated with genre labels. The minority consists of $N_{\text{min}} = 50$ ``Film Noir" plots, and the majority includes generic ``Crime/Mystery" plots. We measure whether each noir plot's nearest neighbor is also a noir plot. Interferers are drawn from three semantically overlapping but stylistically distinct genres (``action", ``thriller", ``crime"), excluding dual-labels. This yields $\approx 5000$ descriptions. As shown in Figure~\ref{fig:thm1-plot}, stylistic signals collapse at significantly lower interference levels than semantic content. Recall@1 drops below $0.2$ at $N_{\text{maj}} \approx 2000$. This sharp drop suggests that nuanced attributes like tone are particularly vulnerable to population-induced interference from high-volume mainstream content.

\paragraph{Topic Retrieval on 20 Newsgroups.}
We use the 20 Newsgroups dataset~\citep{lang1995newsweeder}, which contains newsgroup posts across 20 topics, with ``comp.sys.mac.hardware" designated as the minority topic. The standard train–test split is used, with documents from the training set forming the retrieval index and documents from the test set serving as queries. From these splits, $N_{\text{min}} = 100$ minority documents and $N_{\text{min}} = 100$ minority queries are selected. Ground-truth pairings are established by computing pairwise cosine similarities between all queries and minority documents and assigning each query to its nearest minority document. Approximately $18000$ documents from the remaining 19 topics constitute the interfering population. Figure~\ref{fig:thm1-newsgroups} shows the same qualitative phase transition, where Recall@1 drops from near-perfect performance to roughly $0.2$ as $N_{\text{maj}}$ increases to $15000$. 

The phase transition behavior observed in these experiments aligns with the decay predicted by Theorem~\ref{thm:static_budgeted} and emerges despite the minority cluster structure remaining fixed. Notably, the performance collapse in these real-world datasets occurs much more rapidly—and at significantly lower majority population thresholds—than in the uniform synthetic setting (Figure~\ref{fig:synthetic_validation}). As detailed in Appendix~\ref{appx:ppp}, this accelerated drop is precisely driven by the natural clustering of real-world documents. Clustering creates pockets of super-critical density, causing the local intensity to cross the interference threshold and trigger geometric exclusion much earlier than uniform point processes predict. Higher $k$ values provide only modest improvements, indicating that the geometric crowding effect is not easily mitigated by expanding the retrieval budget. The consistency of this pattern across four modalities (semantic text, visual, stylistic text, and topical text) provides compelling empirical evidence for the universality of the geometric crowding mechanism.

\paragraph{Impact of Two-Stage Pipelines.} To determine if reranking mitigates geometric collapse, we evaluate a standard pipeline (SBERT retriever with MS-MARCO Cross-Encoder) on the 20 Newsgroups dataset, varying the shortlist size $L$ (Figure~\ref{fig:two_stage}). We identify a dual failure mode for minority queries. First, restricted shortlist budgets create a hard retrieval bottleneck where even an Oracle reranker fails, empirically validating the exclusion bounds predicted by Theorem~\ref{thm:static_budgeted}. Second, relaxing this constraint saturates ($L>100$) the Cross-Encoder with a high density of ``hard negatives" from the minority subspace. Consequently, actual ranking performance stagnates or degrades even as Oracle recall improves, demonstrating that geometric crowding induces downstream discriminative confusion that is robust to increased shortlist depth.
\begin{figure}[t]
    \centering
    \tikzset{every picture/.style={yscale=0.95}}
    \resizebox{0.6\linewidth}{!}{% This file was created with tikzplotlib v0.10.1.post13.
\begin{tikzpicture}

\definecolor{crimson2143940}{RGB}{214,39,40}
\definecolor{darkgray176}{RGB}{176,176,176}
\definecolor{darkred}{RGB}{139,0,0}
\definecolor{steelblue31119180}{RGB}{31,119,180}

\begin{axis}[
legend cell align={left},
legend style={
  font=\tiny,
  fill opacity=0.8,
  draw opacity=1,
  text opacity=1,
  at={(1,0.4)},
  anchor= mid east,
  draw=none
},
log basis x={10},
tick pos=left,
x grid style={darkgray176},
xlabel={Shortlist Size (\(\displaystyle L\))},
xlabel style={font=\footnotesize},
% xmajorgrids,
xmin=0.732911438554269, xmax=682.210665160709,
xmode=log,
xtick style={color=black},
xtick={0.01,0.1,1,10,100,1000,10000},
xticklabels={
  \(\displaystyle {10^{-2}}\),
  \(\displaystyle {10^{-1}}\),
  \(\displaystyle {10^{0}}\),
  \(\displaystyle {10^{1}}\),
  \(\displaystyle {10^{2}}\),
  \(\displaystyle {10^{3}}\),
  \(\displaystyle {10^{4}}\)
},
y grid style={darkgray176},
ylabel={Success Rate @ 1},
ylabel style={font=\footnotesize},
% ymajorgrids,
ymin=0, ymax=1.05,
ytick style={color=black}
]
\path [draw=steelblue31119180, fill=steelblue31119180, opacity=0.2]
(axis cs:1,0.99)
--(axis cs:1,0.99)
--(axis cs:2,1)
--(axis cs:5,1)
--(axis cs:10,1)
--(axis cs:20,1)
--(axis cs:50,1)
--(axis cs:100,1)
--(axis cs:200,1)
--(axis cs:500,1)
--(axis cs:500,1)
--(axis cs:500,1)
--(axis cs:200,1)
--(axis cs:100,1)
--(axis cs:50,1)
--(axis cs:20,1)
--(axis cs:10,1)
--(axis cs:5,1)
--(axis cs:2,1)
--(axis cs:1,0.99)
--cycle;

\path [draw=crimson2143940, fill=crimson2143940, opacity=0.2]
(axis cs:1,0.25)
--(axis cs:1,0.25)
--(axis cs:2,0.33)
--(axis cs:5,0.55)
--(axis cs:10,0.66)
--(axis cs:20,0.77)
--(axis cs:50,0.88)
--(axis cs:100,0.92)
--(axis cs:200,0.96)
--(axis cs:500,1)
--(axis cs:500,1)
--(axis cs:500,1)
--(axis cs:200,0.96)
--(axis cs:100,0.92)
--(axis cs:50,0.88)
--(axis cs:20,0.77)
--(axis cs:10,0.66)
--(axis cs:5,0.55)
--(axis cs:2,0.33)
--(axis cs:1,0.25)
--cycle;

\addplot [thick, steelblue31119180, mark=*, mark size=2, mark options={solid}]
table {%
1 0.99
2 1
5 1
10 1
20 1
50 1
100 1
200 1
500 1
};
\addlegendentry{Majority (Oracle Reranker)}
\addplot [thick, crimson2143940, mark=*, mark size=2, mark options={solid}]
table {%
1 0.25
2 0.33
5 0.55
10 0.66
20 0.77
50 0.88
100 0.92
200 0.96
500 1
};
\addlegendentry{Minority (Oracle Reranker)}
\addplot [semithick, steelblue31119180, opacity=0.7, dashed, mark=square*, mark size=2, mark options={solid}]
table {%
1 0.99
2 0.97
5 0.97
10 0.97
20 0.97
50 0.97
100 0.97
200 0.97
500 0.97
};
\addlegendentry{Majority (Cross-Encoder)}
\addplot [semithick, crimson2143940, opacity=0.7, dashed, mark=square*, mark size=2, mark options={solid}]
table {%
1 0.25
2 0.24
5 0.19
10 0.2
20 0.21
50 0.18
100 0.18
200 0.14
500 0.12
};
\addlegendentry{Minority (Cross-Encoder)}
\draw (axis cs:2,0.15) node[
  scale=0.5,
  anchor=base,
  text=darkred,
  rotate=0.0,
  align=center
]{Retrieval
Bottleneck};
\end{axis}

\end{tikzpicture}}
    \caption{\textbf{Reranking Cannot Rescue Geometric Collapse.}
    Retrieval performance on 20 Newsgroups with increasing shortlist size $L$. An Oracle reranker (solid lines) benefits from larger shortlists, however, a realistic Cross-Encoder (dashed lines) fails to recover minority performance. The embedding space becomes so saturated with majority ``hard negatives" that the downstream ranker cannot distinguish the true minority target, demonstrating that geometric crowding creates discriminative confusion robust to budget expansion.}
    \label{fig:two_stage}
\vspace{-1em}
\end{figure}

\subsection{Dynamic Marginalization Under Learning}
\label{sec::marg}
We investigate the hypothesis that standard learning dynamics drive the system from static fragility to irreversible marginalization. Using the Wikipedia Movie Plots dataset, we initialize a $384$-dimensional embedding space with two well-separated clusters corresponding to ``Film Noir" (minority, 50 plots) and ``Crime/Mystery" (majority, 500 plots). Initial retrieval performance is near-perfect. We simulate user feedback with a 98\% majority query rate. Embedding updates follow the drift–diffusion dynamics of Eq.~(\ref{eq:pde}) with learning rate $K = 0.05$ and additive Gaussian noise $\sigma = 0.002$, yielding a ratio $K/\sigma = 25$ that places the system in the learning-dominated regime of Theorem~\ref{thm:marginalization}. At each step, documents whose cosine similarity to the sampled query exceeds an interaction threshold of $0.35$ are pulled toward the query centroid, with drift magnitude weighted by a softmax over similarities (temperature $10$). All embeddings are re-normalized to the unit sphere after each update.

\paragraph{Metastability and Collapse.}
Figure~\ref{fig:dynamic_dynamics} shows the temporal evolution of minority retrievability over $5000$ simulation steps. We observe a distinct metastable regime ($t < 1500$) where minority Recall@1 for minority queries remains high ($>0.8$) despite continuous drift. However, the inter-cluster $L_2$ centroid distance diminishes steadily throughout this phase.
Upon crossing a critical interaction boundary where drift forces minority embeddings into the majority's high-density manifold, performance collapses catastrophically, dropping below $0.1$ within 500 steps. 

% These results empirically validate Theorem~\ref{thm:marginalization}. We show that feedback dynamics systematically reallocate representational capacity toward the majority. The observed metastability conceals this accumulating geometric risk. Deployed systems therefore appear stable immediately prior to sudden, irreversible failure.

\begin{figure*}[t]
    \centering
    \pgfplotsset{every axis/.append style={
        height=4.5cm,      % Fixed height for all plots (Adjust as needed)
        % height=3.2cm,
        % width=0.4\linewidth,  % Fill the subfigure width
        scale only axis,   % Ensures axis is this size (excluding labels)
        label style={font=\footnotesize},
        tick label style={font=\footnotesize}
    }}
    \begin{subfigure}[t]{0.24\textwidth}
        \centering
        % \tikzset{every picture/.style={ transform shape}}
        \resizebox{\linewidth}{!}{% This file was created with tikzplotlib v0.10.1.post13.
\begin{tikzpicture}

\definecolor{crimson2143940}{RGB}{214,39,40}
\definecolor{darkgray176}{RGB}{176,176,176}
\definecolor{darkorange25512714}{RGB}{255,127,14}
\definecolor{forestgreen4416044}{RGB}{44,160,44}
\definecolor{steelblue31119180}{RGB}{31,119,180}

\begin{axis}[
height=6cm,
legend cell align={left},
legend style={nodes={font=\tiny}, fill opacity=0.8, draw opacity=1, text opacity=1, draw=none},
tick pos=left,
width=8cm,
x grid style={darkgray176},
xlabel={Number of Visual Interferers (\(\displaystyle N_{\text{maj}} \times 10^3\))},
% xmajorgrids,
xmin=-0.6, xmax=12.6,
xtick style={color=black},
y grid style={darkgray176},
ylabel={Recall@$k$},
% ymajorgrids,
ymin=0, ymax=1.05,
ytick style={color=black}
]
\path [draw=darkorange25512714, fill=darkorange25512714, opacity=0.3]
(axis cs:0,1)
--(axis cs:0,1)
--(axis cs:0.1,0.63199000104145)
--(axis cs:0.5,0.277225384890082)
--(axis cs:1,0.164314005879336)
--(axis cs:2.5,0.0725647552145625)
--(axis cs:5,0.034891771417825)
--(axis cs:7.5,0.0174970770545171)
--(axis cs:10,0.0174970770545171)
--(axis cs:12,0.0174970770545171)
--(axis cs:12,0.0825029229454829)
--(axis cs:12,0.0825029229454829)
--(axis cs:10,0.0825029229454829)
--(axis cs:7.5,0.0825029229454829)
--(axis cs:5,0.115108228582175)
--(axis cs:2.5,0.157435244785437)
--(axis cs:1,0.265685994120664)
--(axis cs:0.5,0.382774615109918)
--(axis cs:0.1,0.72800999895855)
--(axis cs:0,1)
--cycle;

\path [draw=forestgreen4416044, fill=forestgreen4416044, opacity=0.3]
(axis cs:0,1)
--(axis cs:0,1)
--(axis cs:0.1,0.893025902085825)
--(axis cs:0.5,0.459884422387896)
--(axis cs:1,0.342834722834684)
--(axis cs:2.5,0.141)
--(axis cs:5,0.0925647552145625)
--(axis cs:7.5,0.06000700070014)
--(axis cs:10,0.038422121266231)
--(axis cs:12,0.0295935648689469)
--(axis cs:12,0.110406435131053)
--(axis cs:12,0.110406435131053)
--(axis cs:10,0.121577878733769)
--(axis cs:7.5,0.12999299929986)
--(axis cs:5,0.177435244785437)
--(axis cs:2.5,0.239)
--(axis cs:1,0.497165277165316)
--(axis cs:0.5,0.610115577612104)
--(axis cs:0.1,0.926974097914175)
--(axis cs:0,1)
--cycle;

\path [draw=crimson2143940, fill=crimson2143940, opacity=0.3]
(axis cs:0,1)
--(axis cs:0,1)
--(axis cs:0.1,1)
--(axis cs:0.5,0.72573710246184)
--(axis cs:1,0.60900967893035)
--(axis cs:2.5,0.331412261664763)
--(axis cs:5,0.177614426119676)
--(axis cs:7.5,0.125185746441002)
--(axis cs:10,0.126245615581776)
--(axis cs:12,0.118422121266231)
--(axis cs:12,0.201577878733769)
--(axis cs:12,0.201577878733769)
--(axis cs:10,0.223754384418224)
--(axis cs:7.5,0.264814253558998)
--(axis cs:5,0.372385573880324)
--(axis cs:2.5,0.498587738335237)
--(axis cs:1,0.67099032106965)
--(axis cs:0.5,0.85426289753816)
--(axis cs:0.1,1)
--(axis cs:0,1)
--cycle;

\path [draw=steelblue31119180, fill=steelblue31119180, opacity=0.3]
(axis cs:0,1)
--(axis cs:0,1)
--(axis cs:0.1,1)
--(axis cs:0.5,0.92286855123092)
--(axis cs:1,0.757362385835444)
--(axis cs:2.5,0.564259601461506)
--(axis cs:5,0.331642461472038)
--(axis cs:7.5,0.231412261664763)
--(axis cs:10,0.152940953064913)
--(axis cs:12,0.141412261664763)
--(axis cs:12,0.308587738335237)
--(axis cs:12,0.308587738335237)
--(axis cs:10,0.377059046935087)
--(axis cs:7.5,0.398587738335237)
--(axis cs:5,0.518357538527962)
--(axis cs:2.5,0.695740398538494)
--(axis cs:1,0.832637614164556)
--(axis cs:0.5,0.98713144876908)
--(axis cs:0.1,1)
--(axis cs:0,1)
--cycle;

\addplot [thick, darkorange25512714, mark=*, mark size=1, mark options={solid}]
table {%
0 1
0.1 0.68
0.5 0.33
1 0.215
2.5 0.115
5 0.075
7.5 0.05
10 0.05
12 0.05
};
\addlegendentry{Recall@1}
\addplot [thick, forestgreen4416044, mark=*, mark size=1, mark options={solid}]
table {%
0 1
0.1 0.91
0.5 0.535
1 0.42
2.5 0.19
5 0.135
7.5 0.095
10 0.08
12 0.07
};
\addlegendentry{Recall@2}
\addplot [thick, crimson2143940, mark=*, mark size=1, mark options={solid}]
table {%
0 1
0.1 1
0.5 0.79
1 0.64
2.5 0.415
5 0.275
7.5 0.195
10 0.175
12 0.16
};
\addlegendentry{Recall@5}
\addplot [thick, steelblue31119180, mark=*, mark size=1, mark options={solid}]
table {%
0 1
0.1 1
0.5 0.955
1 0.795
2.5 0.63
5 0.425
7.5 0.315
10 0.265
12 0.225
};
\addlegendentry{Recall@10}
\draw (axis cs:7,0.45) node[
  scale=0.55,
  anchor=base west,
  % text=darkred,
  rotate=0.0
]{Geometric Crowding Region};
\end{axis}

\end{tikzpicture}}
        \caption{CIFAR-100}
        \label{fig:thm1-clip}
    \end{subfigure}
    \begin{subfigure}[t]{0.24\textwidth}
        \centering
        % \tikzset{every picture/.style={ transform shape}}
        \resizebox{\linewidth}{!}{% This file was created with tikzplotlib v0.10.1.post13.
\begin{tikzpicture}

\definecolor{crimson2143940}{RGB}{214,39,40}
\definecolor{darkgray176}{RGB}{176,176,176}
\definecolor{darkorange25512714}{RGB}{255,127,14}
\definecolor{darkred}{RGB}{139,0,0}
\definecolor{forestgreen4416044}{RGB}{44,160,44}
\definecolor{steelblue31119180}{RGB}{31,119,180}

\begin{axis}[
height=6cm,
legend cell align={left},
legend style={nodes={font=\tiny}, fill opacity=0.8, draw opacity=1, text opacity=1, draw=none},
tick pos=left,
width=8cm,
x grid style={darkgray176},
xlabel={Number of Genre Interferers (\(\displaystyle N_{\text{maj}} \times 10^3\))},
% xmajorgrids,
xmin=-0.25, xmax=5.25,
xtick style={color=black},
y grid style={darkgray176},
ylabel={Recall@$k$},
% ymajorgrids,
ymin=0, ymax=1.05,
ytick style={color=black}
]
\path [draw=darkorange25512714, fill=darkorange25512714, opacity=0.3]
(axis cs:0,1)
--(axis cs:0,1)
--(axis cs:0.1,0.53540135373725)
--(axis cs:0.5,0.19966561141923)
--(axis cs:1,0.119922686698021)
--(axis cs:2.5,0.0555557812572845)
--(axis cs:4,0.0341220493344455)
--(axis cs:5,0.026963072467682)
--(axis cs:5,0.069036927532318)
--(axis cs:5,0.069036927532318)
--(axis cs:4,0.0698779506655545)
--(axis cs:2.5,0.0964442187427155)
--(axis cs:1,0.248077313301979)
--(axis cs:0.5,0.35233438858077)
--(axis cs:0.1,0.74459864626275)
--(axis cs:0,1)
--cycle;

\path [draw=forestgreen4416044, fill=forestgreen4416044, opacity=0.3]
(axis cs:0,1)
--(axis cs:0,1)
--(axis cs:0.1,0.77777338630346)
--(axis cs:0.5,0.39411710298319)
--(axis cs:1,0.200276040217054)
--(axis cs:2.5,0.10238800124974)
--(axis cs:4,0.0542375594009074)
--(axis cs:5,0.0385645130925214)
--(axis cs:5,0.113435486907479)
--(axis cs:5,0.113435486907479)
--(axis cs:4,0.129762440599093)
--(axis cs:2.5,0.21761199875026)
--(axis cs:1,0.375723959782946)
--(axis cs:0.5,0.56588289701681)
--(axis cs:0.1,0.94222661369654)
--(axis cs:0,1)
--cycle;

\path [draw=crimson2143940, fill=crimson2143940, opacity=0.3]
(axis cs:0,1)
--(axis cs:0,1)
--(axis cs:0.1,0.988987690822561)
--(axis cs:0.5,0.696307538907654)
--(axis cs:1,0.442708158615432)
--(axis cs:2.5,0.192880410429817)
--(axis cs:4,0.14584600752789)
--(axis cs:5,0.102120738498548)
--(axis cs:5,0.249879261501453)
--(axis cs:5,0.249879261501453)
--(axis cs:4,0.27815399247211)
--(axis cs:2.5,0.383119589570183)
--(axis cs:1,0.637291841384568)
--(axis cs:0.5,0.823692461092346)
--(axis cs:0.1,1.00301230917744)
--(axis cs:0,1)
--cycle;

\path [draw=steelblue31119180, fill=steelblue31119180, opacity=0.3]
(axis cs:0,1)
--(axis cs:0,1)
--(axis cs:0.1,1)
--(axis cs:0.5,0.903865562397951)
--(axis cs:1,0.701293397346687)
--(axis cs:2.5,0.360160807124682)
--(axis cs:4,0.257179363938945)
--(axis cs:5,0.18371400772234)
--(axis cs:5,0.41628599227766)
--(axis cs:5,0.41628599227766)
--(axis cs:4,0.462820636061055)
--(axis cs:2.5,0.575839192875318)
--(axis cs:1,0.850706602653313)
--(axis cs:0.5,0.968134437602049)
--(axis cs:0.1,1)
--(axis cs:0,1)
--cycle;

\addplot [thick, darkorange25512714, mark=*, mark size=1, mark options={solid}]
table {%
0 1
0.1 0.64
0.5 0.276
1 0.184
2.5 0.076
4 0.052
5 0.048
};
\addlegendentry{Recall@1}
\addplot [thick, forestgreen4416044, mark=*, mark size=1, mark options={solid}]
table {%
0 1
0.1 0.86
0.5 0.48
1 0.288
2.5 0.16
4 0.092
5 0.076
};
\addlegendentry{Recall@2}
\addplot [thick, crimson2143940, mark=*, mark size=1, mark options={solid}]
table {%
0 1
0.1 0.996
0.5 0.76
1 0.54
2.5 0.288
4 0.212
5 0.176
};
\addlegendentry{Recall@5}
\addplot [thick, steelblue31119180, mark=*, mark size=1, mark options={solid}]
table {%
0 1
0.1 1
0.5 0.936
1 0.776
2.5 0.468
4 0.36
5 0.3
};
\addlegendentry{Recall@10}
\draw (axis cs:3,0.45) node[
  scale=0.55,
  anchor=base west,
  % text=darkred,
  rotate=0.0
]{Geometric Crowding Region};
\end{axis}

\end{tikzpicture}}
        \caption{Wikipedia Movie Plots}
        \label{fig:thm1-plot}
    \end{subfigure}
    % \hfill
    \begin{subfigure}[t]{0.24\textwidth}
        \centering
        % \tikzset{every picture/.style={transform shape}}
        \resizebox{\linewidth}{!}{% This file was created with tikzplotlib v0.10.1.post13.
\begin{tikzpicture}

\definecolor{crimson2143940}{RGB}{214,39,40}
\definecolor{darkgray176}{RGB}{176,176,176}
\definecolor{darkorange25512714}{RGB}{255,127,14}
\definecolor{darkred}{RGB}{139,0,0}
\definecolor{forestgreen4416044}{RGB}{44,160,44}
\definecolor{steelblue31119180}{RGB}{31,119,180}

\begin{axis}[
height=6cm,
legend cell align={left},
legend style={nodes={font=\tiny}, fill opacity=0.8, draw opacity=1, text opacity=1, draw=none},
tick pos=left,
width=8cm,
x grid style={darkgray176},
xlabel={Number of Majority Interferers (\(\displaystyle N_{\text{maj}} \times 10^3\))},
% xmajorgrids,
xmin=-0.75, xmax=15.75,
xtick style={color=black},
y grid style={darkgray176},
ylabel={Recall@$k$},
% ymajorgrids,
ymin=0.15479161224288, ymax=1.04340377093623,
ytick style={color=black}
]
\path [draw=darkorange25512714, fill=darkorange25512714, opacity=0.3]
(axis cs:0,1.00150615458872)
--(axis cs:0,0.99449384541128)
--(axis cs:0.1,0.834836616358296)
--(axis cs:0.5,0.600646698673344)
--(axis cs:1,0.507950763290243)
--(axis cs:5,0.288097098649231)
--(axis cs:10,0.220935391198343)
--(axis cs:15,0.195183074001668)
--(axis cs:15,0.248816925998332)
--(axis cs:15,0.248816925998332)
--(axis cs:10,0.283064608801657)
--(axis cs:5,0.355902901350769)
--(axis cs:1,0.564049236709757)
--(axis cs:0.5,0.675353301326657)
--(axis cs:0.1,0.897163383641704)
--(axis cs:0,1.00150615458872)
--cycle;

\path [draw=forestgreen4416044, fill=forestgreen4416044, opacity=0.3]
(axis cs:0,1)
--(axis cs:0,1)
--(axis cs:0.1,0.94071702646734)
--(axis cs:0.5,0.771111562514569)
--(axis cs:1,0.682410655363728)
--(axis cs:5,0.429099313145565)
--(axis cs:10,0.344347365354524)
--(axis cs:15,0.294875322272516)
--(axis cs:15,0.341124677727484)
--(axis cs:15,0.341124677727484)
--(axis cs:10,0.407652634645476)
--(axis cs:5,0.518900686854435)
--(axis cs:1,0.785589344636271)
--(axis cs:0.5,0.852888437485431)
--(axis cs:0.1,0.97128297353266)
--(axis cs:0,1)
--cycle;

\path [draw=crimson2143940, fill=crimson2143940, opacity=0.3]
(axis cs:0,1)
--(axis cs:0,1)
--(axis cs:0.1,0.985440585391973)
--(axis cs:0.5,0.913368848506492)
--(axis cs:1,0.850235130894957)
--(axis cs:5,0.632254988740724)
--(axis cs:10,0.479573781980359)
--(axis cs:15,0.436764406640754)
--(axis cs:15,0.531235593359246)
--(axis cs:15,0.531235593359246)
--(axis cs:10,0.59242621801964)
--(axis cs:5,0.711745011259276)
--(axis cs:1,0.901764869105043)
--(axis cs:0.5,0.946631151493508)
--(axis cs:0.1,0.998559414608027)
--(axis cs:0,1)
--cycle;

\path [draw=steelblue31119180, fill=steelblue31119180, opacity=0.3]
(axis cs:0,1)
--(axis cs:0,1)
--(axis cs:0.1,0.988987690822561)
--(axis cs:0.5,0.955117565447479)
--(axis cs:1,0.917950964568709)
--(axis cs:5,0.739557496025986)
--(axis cs:10,0.624663047040209)
--(axis cs:15,0.54812109091974)
--(axis cs:15,0.65987890908026)
--(axis cs:15,0.65987890908026)
--(axis cs:10,0.707336952959791)
--(axis cs:5,0.808442503974014)
--(axis cs:1,0.954049035431291)
--(axis cs:0.5,0.980882434552521)
--(axis cs:0.1,1.00301230917744)
--(axis cs:0,1)
--cycle;

\addplot [thick, darkorange25512714, mark=*, mark size=1, mark options={solid}]
table {%
0 0.998
0.1 0.866
0.5 0.638
1 0.536
5 0.322
10 0.252
15 0.222
};
\addlegendentry{Recall@1}
\addplot [thick, forestgreen4416044, mark=*, mark size=1, mark options={solid}]
table {%
0 1
0.1 0.956
0.5 0.812
1 0.734
5 0.474
10 0.376
15 0.318
};
\addlegendentry{Recall@2}
\addplot [thick, crimson2143940, mark=*, mark size=1, mark options={solid}]
table {%
0 1
0.1 0.992
0.5 0.93
1 0.876
5 0.672
10 0.536
15 0.484
};
\addlegendentry{Recall@5}
\addplot [thick, steelblue31119180, mark=*, mark size=1, mark options={solid}]
table {%
0 1
0.1 0.996
0.5 0.968
1 0.936
5 0.774
10 0.666
15 0.604
};
\addlegendentry{Recall@10}
\draw (axis cs:9,0.45) node[
  scale=0.55,
  anchor=base west,
  % text=darkred,
  rotate=0.0
]{Geometric Crowding Region};
\end{axis}

\end{tikzpicture}}
        \caption{20 Newsgroups}
        \label{fig:thm1-newsgroups}
    \end{subfigure}
    \begin{subfigure}[t]{0.24\textwidth}
        \centering
        % \tikzset{every picture/.style={transform shape}}
        \resizebox{\linewidth}{!}{% This file was created with tikzplotlib v0.10.1.post13.
\begin{tikzpicture}

\definecolor{crimson2143940}{RGB}{214,39,40}
\definecolor{darkgray176}{RGB}{176,176,176}
\definecolor{darkorange25512714}{RGB}{255,127,14}
\definecolor{darkred}{RGB}{139,0,0}
\definecolor{forestgreen4416044}{RGB}{44,160,44}
\definecolor{steelblue31119180}{RGB}{31,119,180}

\begin{axis}[
height=6cm,
legend cell align={left},
legend style={
  nodes={font=\tiny},
  fill opacity=0.8,
  draw opacity=1,
  text opacity=1,
  at={(0.03,0.03)},
  anchor=south west,
  draw=none
},
tick pos=left,
width=8cm,
x grid style={darkgray176},
xlabel={Number of Majority Interferers (\(\displaystyle N_{\text{maj}} \times 10^3\))},
% xmajorgrids,
xmin=-12.96355, xmax=272.23455,
xtick style={color=black},
y grid style={darkgray176},
ylabel={Recall@$k$},
% ymajorgrids,
ymin=0, ymax=1.05,
ytick style={color=black}
]
\path [draw=darkorange25512714, fill=darkorange25512714, opacity=0.3]
(axis cs:0,0.989118829216054)
--(axis cs:0,0.962881170783946)
--(axis cs:1,0.951932781198975)
--(axis cs:5,0.910011843506716)
--(axis cs:10,0.851456917878962)
--(axis cs:25,0.804456282835498)
--(axis cs:50,0.739411710298319)
--(axis cs:100,0.64432)
--(axis cs:259.271,0.458504057026536)
--(axis cs:259.271,0.513495942973464)
--(axis cs:259.271,0.513495942973464)
--(axis cs:100,0.67568)
--(axis cs:50,0.756588289701681)
--(axis cs:25,0.815543717164503)
--(axis cs:10,0.900543082121038)
--(axis cs:5,0.949988156493284)
--(axis cs:1,0.984067218801025)
--(axis cs:0,0.989118829216054)
--cycle;

\path [draw=forestgreen4416044, fill=forestgreen4416044, opacity=0.3]
(axis cs:0,1)
--(axis cs:0,1)
--(axis cs:1,0.99449384541128)
--(axis cs:5,0.972110051303727)
--(axis cs:10,0.940881170783946)
--(axis cs:25,0.898478557430354)
--(axis cs:50,0.843413266466149)
--(axis cs:100,0.793333671885927)
--(axis cs:259.271,0.634352772810935)
--(axis cs:259.271,0.693647227189064)
--(axis cs:259.271,0.693647227189064)
--(axis cs:100,0.854666328114073)
--(axis cs:50,0.896586733533851)
--(axis cs:25,0.937521442569646)
--(axis cs:10,0.967118829216054)
--(axis cs:5,0.995889948696273)
--(axis cs:1,1.00150615458872)
--(axis cs:0,1)
--cycle;

\path [draw=crimson2143940, fill=crimson2143940, opacity=0.3]
(axis cs:0,1)
--(axis cs:0,1)
--(axis cs:1,1)
--(axis cs:5,1)
--(axis cs:10,1)
--(axis cs:25,0.970881170783946)
--(axis cs:50,0.931757932734326)
--(axis cs:100,0.889257309721253)
--(axis cs:259.271,0.815117565447478)
--(axis cs:259.271,0.840882434552521)
--(axis cs:259.271,0.840882434552521)
--(axis cs:100,0.930742690278746)
--(axis cs:50,0.960242067265675)
--(axis cs:25,0.997118829216054)
--(axis cs:10,1)
--(axis cs:5,1)
--(axis cs:1,1)
--(axis cs:0,1)
--cycle;

\path [draw=steelblue31119180, fill=steelblue31119180, opacity=0.3]
(axis cs:0,1)
--(axis cs:0,1)
--(axis cs:1,1)
--(axis cs:5,1)
--(axis cs:10,1)
--(axis cs:25,0.99449384541128)
--(axis cs:50,0.972110051303727)
--(axis cs:100,0.933282256546261)
--(axis cs:259.271,0.889742629978991)
--(axis cs:259.271,0.922257370021009)
--(axis cs:259.271,0.922257370021009)
--(axis cs:100,0.970717743453739)
--(axis cs:50,0.995889948696273)
--(axis cs:25,1.00150615458872)
--(axis cs:10,1)
--(axis cs:5,1)
--(axis cs:1,1)
--(axis cs:0,1)
--cycle;

\addplot [thick, darkorange25512714, mark=*, mark size=1, mark options={solid}]
table {%
0 0.976
1 0.968
5 0.93
10 0.876
25 0.81
50 0.748
100 0.66
259.271 0.486
};
\addlegendentry{Recall@1}
\addplot [thick, forestgreen4416044, mark=*, mark size=1, mark options={solid}]
table {%
0 1
1 0.998
5 0.984
10 0.954
25 0.918
50 0.87
100 0.824
259.271 0.664
};
\addlegendentry{Recall@2}
\addplot [thick, crimson2143940, mark=*, mark size=1, mark options={solid}]
table {%
0 1
1 1
5 1
10 1
25 0.984
50 0.946
100 0.91
259.271 0.828
};
\addlegendentry{Recall@5}
\addplot [thick, steelblue31119180, mark=*, mark size=1, mark options={solid}]
table {%
0 1
1 1
5 1
10 1
25 0.998
50 0.984
100 0.952
259.271 0.906
};
\addlegendentry{Recall@10}
\draw (axis cs:155.5626,0.4) node[
  scale=0.55,
  anchor=base west,
  % text=darkred,
  rotate=0.0
]{Geometric Crowding Region};
\end{axis}

\end{tikzpicture}}
        \caption{Quora Question Pairs (QQP)}
        \label{fig:thm1-quora}
    \end{subfigure}
    \caption{\textbf{Universality of Geometric Collapse.} We observe a consistent performance collapse across modalities as majority density increases. Shaded regions indicate 95\% confidence intervals. The consistent degradation across $k \in \{1, 2, 5, 10\}$ demonstrates that increasing the retrieval budget provides negligible mitigation against density-induced exclusion. \textbf{(a) Visual Retrieval (CIFAR-100).} Perceptual crowding among semantically similar mammal classes causes recall to vanish. \textbf{(b) Stylistic Retrieval (Wikipedia Movie Plots).} Narrative style exhibits the most rapid collapse. This trend suggests that nuanced attributes are highly vulnerable to erasure by majority content. \textbf{(c) Topical Retrieval (20 Newsgroups).} High-density majority topics saturate the embedding space and displace the minority topic despite its thematic distinctness. \textbf{(d) Semantic Retrieval (QQP).} Lexical overlap drives collision between duplicate and non-duplicate questions.}
    \label{fig:static_results}
\end{figure*}

\begin{figure}[t]
    \centering
      \tikzset{every picture/.style={yscale=1}}
    \resizebox{0.65\linewidth}{!}{% This file was created with tikzplotlib v0.10.1.
\begin{tikzpicture}

\definecolor{darkgray176}{RGB}{176,176,176}
\definecolor{darkorange25512714}{RGB}{255,127,14}
\definecolor{gray}{RGB}{128,128,128}
\definecolor{steelblue31119180}{RGB}{31,119,180}

\begin{axis}[
height=6cm,
tick align=outside,
tick pos=left,
width=8cm,
x grid style={darkgray176},
xlabel={Simulation Steps},
xlabel style={font=\tiny},
xmin=-245, xmax=5145,
xtick style={color=black},
y grid style={darkgray176},
ylabel=\textcolor{darkorange25512714}{Film Noir Distinctiveness (Recall@1)},
ylabel style={font=\tiny}, 
ymin=-0.05, ymax=1.05,
ytick style={color=black}
]
\addplot [very thick, darkorange25512714]
table {%
0 1
100 1
200 1
300 1
400 1
500 1
600 1
700 1
800 1
900 1
1000 1
1100 1
1200 1
1300 0.98
1400 1
1500 0.94
1600 0.84
1700 0.64
1800 0.58
1900 0.46
2000 0.3
2100 0.28
2200 0.12
2300 0.18
2400 0.28
2500 0.26
2600 0.18
2700 0.1
2800 0.1
2900 0.12
3000 0.08
3100 0.1
3200 0.1
3300 0.06
3400 0.1
3500 0.04
3600 0.04
3700 0.04
3800 0.08
3900 0.08
4000 0.12
4100 0.12
4200 0.08
4300 0.02
4400 0.06
4500 0.04
4600 0
4700 0.06
4800 0.04
4900 0.04
};
\end{axis}

\begin{axis}[
axis y line=right,
height=6cm,
tick align=outside,
width=8cm,
x grid style={darkgray176},
% xmajorgrids,
xmin=-245, xmax=5145,
xtick pos=left,
xtick style={color=black},
y grid style={darkgray176},
ylabel=\textcolor{steelblue31119180}{Assimilation Drift ($L_2$)},
ylabel style={font=\tiny}, 
% ymajorgrids,
ymin=-0.0292599905282259, ymax=1.47284460403025,
ytick pos=right,
ytick style={color=black},
yticklabel style={anchor=west}
]
\addplot [thick, steelblue31119180]
table {%
0 0.039017491042614
100 0.383016288280487
200 0.532788991928101
300 0.638853549957275
400 0.725602805614471
500 0.799897432327271
600 0.861453711986542
700 0.910838782787323
800 0.95679122209549
900 0.993467569351196
1000 1.02978432178497
1100 1.05706119537354
1200 1.08263599872589
1300 1.11641526222229
1400 1.14133751392365
1500 1.16294705867767
1600 1.18607592582703
1700 1.20343685150146
1800 1.22105884552002
1900 1.23541283607483
2000 1.24613642692566
2100 1.25996494293213
2200 1.27023077011108
2300 1.27854824066162
2400 1.28858637809753
2500 1.29677212238312
2600 1.30759084224701
2700 1.31165254116058
2800 1.31889176368713
2900 1.32764339447021
3000 1.33550798892975
3100 1.3432731628418
3200 1.34845197200775
3300 1.35427296161652
3400 1.35854482650757
3500 1.36234438419342
3600 1.36631059646606
3700 1.36827409267426
3800 1.36911988258362
3900 1.37339246273041
4000 1.37518131732941
4100 1.37556350231171
4200 1.37778258323669
4300 1.38343989849091
4400 1.38891494274139
4500 1.39160597324371
4600 1.39392256736755
4700 1.39666330814362
4800 1.40034437179565
4900 1.40456712245941
};
\addplot [semithick, gray, opacity=0.5]
table {%
2000 -0.0292599905282259
2000 1.47284460403025
};
\draw (axis cs:1000,0.5) node[
  scale=0.5,
  fill=white,
  draw=none,
  line width=0.4pt,
  inner sep=4pt,
  fill opacity=0.9,
  anchor=base,
  text=black,
  rotate=0.0,
  align=center
]{Safe Zone
(Recall High)};
\draw (axis cs:3500,0.5) node[
  scale=0.5,
  fill=white,
  draw=none,
  line width=0.4pt,
  inner sep=4pt,
  fill opacity=0.9,
  anchor=base,
  text=black,
  rotate=0.0,
  align=center
]{Assimilation
(Recall Drops)};
\end{axis}

\end{tikzpicture}}
    \caption{\textbf{Metastable Collapse of Fairness.} Evolution of minority retrieval performance during feedback-driven updates on the Wikipedia Movie Plots dataset ($d=384$). The system exhibits a metastable regime ($t < 1500$) where minority recall (orange) remains high, masking the accumulation of geometric risk. As the minority embeddings silently drift toward the majority manifold (blue curve showing decreasing inter-cluster distance), they cross a critical interaction boundary. This triggers a sudden, irreversible collapse into a winner-take-all state where the agent exclusively serves the majority.}
    \label{fig:dynamic_dynamics}
    \vspace{-.3em}
\end{figure}

\section{Related Work}
\paragraph{Retrieval-Augmented Generation (RAG).}
RAG systems \citep{lewis2020rag,izacard2021contriever, guu2020realm} have become a dominant paradigm for grounding LLMs in external knowledge. Numerous works have explored improving retrieval accuracy \citep{karpukhin2020dpr,ni2022sentencet5}, hybrid retrievers \citep{ram2023hybrid}, scalable vector search \citep{douze2025faiss,rusum2024vector}, and reranking \citep{nogueira2019rerank,ma2021prop}. However, nearly all evaluation methodologies treat queries as i.i.d.\ and independent across users. Our work highlights that this assumption overlooks the system-level interactions induced by shared embedding spaces, revealing population-dependent failure modes absent from per-query evaluations.

\paragraph{Embedding Geometry, Crowding, and Hubness.}
High-dimensional embeddings suffer from well-known geometric pathologies, including concentration of measure \citep{vershynin2018highdim}, hubness \citep{radovanovic2010hubs,tomasev2014hubs}, anisotropy \citep{gao2019representation}, and manifold collapse \citep{wang2020understanding}. These phenomena degrade nearest-neighbor retrieval quality and disproportionately affect low-density regions. We show that these representational limits operate as a \emph{population-level mechanism} for exclusion: the interaction of these geometric constraints with heterogeneous user distributions leads to systemic marginalization of minority goals.

\paragraph{Fairness and Bias in Information Retrieval.}
Fairness in ranking and retrieval systems \citep{wang2023survey,beutel2019fairness,celis2017ranking, dai2024bias, hu2024no, kim2025towards}, and generative model bias \citep{bender2021stochastic, qadri2025risks, ghosh2024generative} have been studied extensively. Classical popularity bias often stems from historical interaction data or algorithmic amplification \cite{klimashevskaia2024survey}. In contrast, our static model proves that a foundational form of popularity bias—the exclusion of minority intents—emerges endogenously purely from the spatial crowding of the shared embedding space, even before any interaction or dynamic feedback occurs. However, existing analyses predominantly treat retrieval as a sequence of isolated decisions, overlooking the feedback-driven evolution of the underlying embedding space. We extend this perspective to the geometric and dynamical domain and demonstrate that endogenous unfairness emerges naturally from goal collisions and density-dependent updates, even when the learning objective is accuracy-driven.

\paragraph{Mean-Field Theory and Multi-Agent Learning.} Mean-field theory provides a rigorous framework for modeling large interacting populations \citep{lasry2007meanfield,carmona2018probabilistic}, with applications spanning neural networks \citep{mei2018meanfield,rotskoff2020neural} and multi-agent reinforcement learning \citep{yang2018mean}. We extend this formalism to retrieval systems and show that standard learning updates induce document embedding evolution governed by the non-linear Fokker–Planck equation. Our formulation aligns with foundational analyses of self-organization \citep{krichene2015fokker,chewi2020fokker}, yet uncovers a distinct winner-take-all phenomenon where density-amplified geometric forces drive the active subsumption of minority goals by the majority.

\paragraph{Systemic Feedback and Representational Limits.}
Recent work highlights systemic risks in AI, including self-preferencing \citep{shumailov2024modelcollapse}, homogenization \citep{jiang2025artificial}, and feedback loops that drive rich-get-richer dynamics \citep{park2023generativeagents,liang2022holistic,morik2020controlling, singh2018fairness}. Furthermore, prior works study bias amplification via synthetic data generation in self-consuming loops \citep{taori2023data, wyllie2024fairness}. Our dynamic model proves a distinct, non-generative mechanism: marginalization driven strictly by gradient updates attempting to maximize local retrieval accuracy in a continuous, crowded space. While embedding models possess known expressivity bounds \citep{weller2025theoretical} and suffer from data skew \citep{bender2021stochastic}, we identify a complementary, population-dependent failure mode. We show that shared embedding spaces constitute an implicit resource allocation system where geometric crowding and feedback dynamics drive minority collapse independently of model capacity or biased learning objectives.

\section{Discussion and Conclusion}
This paper presents a mean-field framework to analyze retrieval-augmented agents in multi-user settings. We demonstrate that shared embedding geometry induces emergent unfairness. We prove the existence of a phase transition in static retrieval where minority performance catastrophically collapses. Furthermore, we derive dynamic equations which reveal that standard feedback-driven learning drives the system to actively marginalize minority groups. These findings reframe RAG evaluation from isolated query metrics to collective multi-agent dynamics and expose critical failure modes for latent interests where metadata is absent.

\paragraph{Deployability and Mitigation.} 
Our findings not only expose systemic risks but also suggest concrete deployment strategies for mitigating them. One effective approach is Metadata Pre-Filtering (Hard Subspace Isolation), where structured attributes (e.g., genre) are applied before vector similarity search. Restricting retrieval to a minority-specific subspace eliminates the influence of the surrounding majority density, thereby avoiding the spatial crowding effect and preventing the drift-induced marginalization characterized in Theorem~\ref{thm:marginalization}. In addition, the theoretical critical threshold $N_c$ provides a practical early-warning signal. By continuously estimating local document intensity and monitoring its proximity to $N_c$, system administrators can assess the geometric stability of the embedding space and trigger adaptive interventions before a sharp phase transition occurs.

\paragraph{Vulnerabilities in Graph-Based ANN.} It is important to note that practical graph-based Approximate Nearest Neighbor (ANN) algorithms, such as HNSW, actually amplify rather than mitigate these geometric crowding effects. Because high-density majority documents act as statistical hubs, greedy graph routing paths become strongly biased toward these majority clusters. Consequently, a small local radius search around a minority query quickly saturates with majority interferers, causing the search to hit a local minimum and terminate early without finding the minority targets.  

\paragraph{Limitations.} Our theoretical framework relies on necessary abstractions. We assume a one-to-one query-document correspondence to derive clean analytic bounds; while multi-document relevance delays the phase transition slightly, it does not prevent the fundamental geometric collapse. Our dynamic model operates in the thermodynamic limit, assuming an effectively infinite document population, whereas real RAG systems operate at finite scales. Finally, while we propose metadata filtering as a safeguard, empirical testing of algorithmic mitigations—such as density-aware regularization—is left to future work. 

\section*{Impact Statement}
This work demonstrates that geometric crowding in shared embedding spaces induces emergent unfairness in retrieval-augmented agents. We show that minority user goals undergo abrupt performance collapse even when faithfully reflecting population statistics. Our analysis reframes retrieval as a collective population dynamic and provides a theoretical foundation to diagnose systemic exclusion. These findings highlight a critical societal risk where standard accuracy optimization drives cultural homogenization and the erasure of specialized knowledge. We aim to guide the development of inclusive and robust retrieval-augmented agents.

\bibliography{refs}

@inproceedings{
gao2019representation,
title={Representation Degeneration Problem in Training Natural Language Generation Models},
author={Jun Gao and Di He and Xu Tan and Tao Qin and Liwei Wang and Tieyan Liu},
booktitle={International Conference on Learning Representations},
year={2019},
}

@inproceedings{bender2021stochastic,
  title={On the dangers of stochastic parrots: Can language models be too big?},
  author={Bender, Emily M and Gebru, Timnit and McMillan-Major, Angelina and Shmitchell, Shmargaret},
  booktitle={Proceedings of the 2021 ACM conference on fairness, accountability, and transparency},
  pages={610--623},
  year={2021}
}

@inproceedings{beutel2019fairness,
  title={Fairness in recommendation ranking through pairwise comparisons},
  author={Beutel, Alex and Chen, Jilin and Doshi, Tulsee and Qian, Hai and Wei, Li and Wu, Yi and Heldt, Lukasz and Zhao, Zhe and Hong, Lichan and Chi, Ed H and others},
  booktitle={Proceedings of the 25th ACM SIGKDD international conference on knowledge discovery \& data mining},
  pages={2212--2220},
  year={2019}
}

@book{carmona2018probabilistic,
  title={Probabilistic theory of mean field games with applications I-II},
  author={Carmona, Ren{\'e} and Delarue, Fran{\c{c}}ois and others},
  volume={3},
  year={2018},
  publisher={Springer}
}

@article{celis2017ranking,
  title={Ranking with fairness constraints},
  author={Celis, L Elisa and Straszak, Damian and Vishnoi, Nisheeth K},
  journal={arXiv preprint arXiv:1704.06840},
  year={2017}
}

@article{krichene2015fokker,
  title={Accelerated mirror descent in continuous and discrete time},
  author={Krichene, Walid and Bayen, Alexandre and Bartlett, Peter L},
  journal={Advances in neural information processing systems},
  volume={28},
  year={2015}
}

@article{chewi2020fokker,
  title={Exponential ergodicity of mirror-Langevin diffusions},
  author={Chewi, Sinho and Le Gouic, Thibaut and Lu, Chen and Maunu, Tyler and Rigollet, Philippe and Stromme, Austin},
  journal={Advances in Neural Information Processing Systems},
  volume={33},
  pages={19573--19585},
  year={2020}
}

@article{izacard2021contriever,
  title={Unsupervised dense information retrieval with contrastive learning},
  author={Izacard, Gautier and Caron, Mathilde and Hosseini, Lucas and Riedel, Sebastian and Bojanowski, Piotr and Joulin, Armand and Grave, Edouard},
  journal={arXiv preprint arXiv:2112.09118},
  year={2021}
}

@inproceedings{karpukhin2020dpr,
  title={Dense Passage Retrieval for Open-Domain Question Answering.},
  author={Karpukhin, Vladimir and Oguz, Barlas and Min, Sewon and Lewis, Patrick SH and Wu, Ledell and Edunov, Sergey and Chen, Danqi and Yih, {Wen-tau}},
  booktitle={EMNLP (1)},
  pages={6769--6781},
  year={2020}
}

@article{lasry2007meanfield,
  title={Mean field games},
  author={Lasry, Jean-Michel and Lions, Pierre-Louis},
  journal={Japanese journal of mathematics},
  volume={2},
  number={1},
  pages={229--260},
  year={2007},
  publisher={Springer}
}

@article{lewis2020rag,
  title={Retrieval-augmented generation for knowledge-intensive nlp tasks},
  author={Lewis, Patrick and Perez, Ethan and Piktus, Aleksandra and Petroni, Fabio and Karpukhin, Vladimir and Goyal, Naman and K{\"u}ttler, Heinrich and Lewis, Mike and Yih, {Wen-tau} and Rockt{\"a}schel, Tim and others},
  journal={Advances in neural information processing systems},
  volume={33},
  pages={9459--9474},
  year={2020}
}

@article{liang2022holistic,
  title={Holistic evaluation of language models},
  author={Liang, Percy and Bommasani, Rishi and Lee, Tony and Tsipras, Dimitris and Soylu, Dilara and Yasunaga, Michihiro and Zhang, Yian and Narayanan, Deepak and Wu, Yuhuai and Kumar, Ananya and others},
  journal={arXiv preprint arXiv:2211.09110},
  year={2022}
}

@inproceedings{ma2021prop,
  title={Prop: Pre-training with representative words prediction for ad-hoc retrieval},
  author={Ma, Xinyu and Guo, Jiafeng and Zhang, Ruqing and Fan, Yixing and Ji, Xiang and Cheng, Xueqi},
  booktitle={Proceedings of the 14th ACM international conference on web search and data mining},
  pages={283--291},
  year={2021}
}

@article{mei2018meanfield,
  title={A mean field view of the landscape of two-layer neural networks},
  author={Mei, Song and Montanari, Andrea and Nguyen, Phan-Minh},
  journal={Proceedings of the National Academy of Sciences},
  volume={115},
  number={33},
  pages={E7665--E7671},
  year={2018},
  publisher={National Academy of Sciences}
}

@inproceedings{ni2022sentencet5,
  title={Sentence-t5: Scalable sentence encoders from pre-trained text-to-text models},
  author={Ni, Jianmo and Abrego, Gustavo Hernandez and Constant, Noah and Ma, Ji and Hall, Keith and Cer, Daniel and Yang, Yinfei},
  booktitle={Findings of the association for computational linguistics: ACL 2022},
  pages={1864--1874},
  year={2022}
}

@article{nogueira2019rerank,
  title={Passage Re-ranking with BERT},
  author={Nogueira, Rodrigo and Cho, Kyunghyun},
  journal={arXiv preprint arXiv:1901.04085},
  year={2019}
}

@inproceedings{park2023generativeagents,
  title={Generative agents: Interactive simulacra of human behavior},
  author={Park, Joon Sung and O'Brien, Joseph and Cai, Carrie Jun and Morris, Meredith Ringel and Liang, Percy and Bernstein, Michael S},
  booktitle={Proceedings of the 36th annual acm symposium on user interface software and technology},
  pages={1--22},
  year={2023}
}

@article{radovanovic2010hubs,
  title={Hubs in space: Popular nearest neighbors in high-dimensional data},
  author={Radovanovic, Milos and Nanopoulos, Alexandros and Ivanovic, Mirjana},
  journal={Journal of Machine Learning Research},
  volume={11},
  pages={2487--2531},
  year={2010}
}

@article{ram2023hybrid,
  title={In-context retrieval-augmented language models},
  author={Ram, Ori and Levine, Yoav and Dalmedigos, Itay and Muhlgay, Dor and Shashua, Amnon and Leyton-Brown, Kevin and Shoham, Yoav},
  journal={Transactions of the Association for Computational Linguistics},
  volume={11},
  pages={1316--1331},
  year={2023},
  publisher={MIT Press One Broadway, 12th Floor, Cambridge, Massachusetts 02142, USA~…}
}

@article{rotskoff2020neural,
author = {Sirignano, Justin and Spiliopoulos, Konstantinos},
title = {Mean Field Analysis of Neural Networks: A Law of Large Numbers},
journal = {SIAM Journal on Applied Mathematics},
volume = {80},
number = {2},
pages = {725-752},
year = {2020}
}

@article{shumailov2024modelcollapse,
  title={The curse of recursion: Training on generated data makes models forget},
  author={Shumailov, Ilia and Shumaylov, Zakhar and Zhao, Yiren and Gal, Yarin and Papernot, Nicolas and Anderson, Ross},
  journal={arXiv preprint arXiv:2305.17493},
  year={2023}
}

@article{tomasev2014hubs,
  title={The role of hubness in clustering high-dimensional data},
  author={Tomasev, Nenad and Radovanovic, Milos and Mladenic, Dunja and Ivanovic, Mirjana},
  journal={IEEE transactions on knowledge and data engineering},
  volume={26},
  number={3},
  pages={739--751},
  year={2013},
  publisher={IEEE}
}

@book{vershynin2018highdim,
  title={High-dimensional probability: An introduction with applications in data science},
  author={Vershynin, Roman},
  volume={47},
  year={2018},
  publisher={Cambridge university press}
}

@inproceedings{wang2020understanding,
  title={Understanding contrastive representation learning through alignment and uniformity on the hypersphere},
  author={Wang, Tongzhou and Isola, Phillip},
  booktitle={International conference on machine learning},
  pages={9929--9939},
  year={2020}
}

@inproceedings{guu2020realm,
  title={Retrieval augmented language model pre-training},
  author={Guu, Kelvin and Lee, Kenton and Tung, Zora and Pasupat, Panupong and Chang, Mingwei},
  booktitle={International conference on machine learning},
  pages={3929--3938},
  year={2020}
}

@inproceedings{chen2024benchmarking,
  title={Benchmarking large language models in retrieval-augmented generation},
  author={Chen, Jiawei and Lin, Hongyu and Han, Xianpei and Sun, Le},
  booktitle={Proceedings of the AAAI Conference on Artificial Intelligence},
  volume={38},
  number={16},
  pages={17754--17762},
  year={2024}
}

@inproceedings{kim2025towards,
  title={Towards fair rag: On the impact of fair ranking in retrieval-augmented generation},
  author={Kim, To Eun and Diaz, Fernando},
  booktitle={Proceedings of the 2025 International ACM SIGIR Conference on Innovative Concepts and Theories in Information Retrieval (ICTIR)},
  pages={33--43},
  year={2025}
}

@article{valeriani2023geometry,
  title={The geometry of hidden representations of large transformer models},
  author={Valeriani, Lucrezia and Doimo, Diego and Cuturello, Francesca and Laio, Alessandro and Ansuini, Alessio and Cazzaniga, Alberto},
  journal={Advances in Neural Information Processing Systems},
  volume={36},
  pages={51234--51252},
  year={2023}
}

@article{liu2024lost,
  title={Lost in the middle: How language models use long contexts},
  author={Liu, Nelson F and Lin, Kevin and Hewitt, John and Paranjape, Ashwin and Bevilacqua, Michele and Petroni, Fabio and Liang, Percy},
  journal={Transactions of the Association for Computational Linguistics},
  volume={12},
  pages={157--173},
  year={2024}
}

@inproceedings{singh2018fairness,
  title={Fairness of exposure in rankings},
  author={Singh, Ashudeep and Joachims, Thorsten},
  booktitle={Proceedings of the 24th ACM SIGKDD international conference on knowledge discovery \& data mining},
  pages={2219--2228},
  year={2018}
}

@article{nielsen2023hubness,
  title={Hubness reduction improves sentence-BERT semantic spaces},
  author={Nielsen, Beatrix MG and Hansen, Lars Kai},
  journal={arXiv preprint arXiv:2311.18364},
  year={2023}
}

@inproceedings{vu2024freshllms,
  title={Freshllms: Refreshing large language models with search engine augmentation},
  author={Vu, Tu and Iyyer, Mohit and Wang, Xuezhi and Constant, Noah and Wei, Jerry and Wei, Jason and Tar, Chris and Sung, Yun-Hsuan and Zhou, Denny and Le, Quoc and others},
  booktitle={Findings of the Association for Computational Linguistics: ACL 2024},
  pages={13697--13720},
  year={2024}
}

@article{mignacco2020dynamical,
  title={Dynamical mean-field theory for stochastic gradient descent in gaussian mixture classification},
  author={Mignacco, Francesca and Krzakala, Florent and Urbani, Pierfrancesco and Zdeborov{\'a}, Lenka},
  journal={Advances in Neural Information Processing Systems},
  volume={33},
  pages={9540--9550},
  year={2020}
}

@article{dohmatob2024tale,
  title={A tale of tails: Model collapse as a change of scaling laws},
  author={Dohmatob, Elvis and Feng, Yunzhen and Yang, Pu and Charton, Francois and Kempe, Julia},
  journal={arXiv preprint arXiv:2402.07043},
  year={2024}
}

@article{ethayarajh2019contextual,
  title={How contextual are contextualized word representations? Comparing the geometry of BERT, ELMo, and GPT-2 embeddings},
  author={Ethayarajh, Kawin},
  journal={arXiv preprint arXiv:1909.00512},
  year={2019}
}

@inproceedings{godey2024anisotropy,
  title={Anisotropy is inherent to self-attention in transformers},
  author={Godey, Nathan and Clergerie, {\'E}ric and Sagot, Beno{\^\i}t},
  booktitle={Proceedings of the 18th Conference of the European Chapter of the Association for Computational Linguistics (Volume 1: Long Papers)},
  pages={35--48},
  year={2024}
}

@inproceedings{alemohammad2024selfconsuming,
  title={Self-consuming generative models go mad},
  author={Alemohammad, Sina and Casco-Rodriguez, Josue and Luzi, Lorenzo and Humayun, Ahmed Imtiaz and Babaei, Hossein and LeJeune, Daniel and Siahkoohi, Ali and Baraniuk, Richard},
  booktitle={The Twelfth International Conference on Learning Representations},
  year={2023}
}

@article{weller2025theoretical,
  title={On the theoretical limitations of embedding-based retrieval},
  author={Weller, Orion and Boratko, Michael and Naim, Iftekhar and Lee, Jinhyuk},
  journal={arXiv preprint arXiv:2508.21038},
  year={2025}
}

@inproceedings{morik2020controlling,
  title={Controlling fairness and bias in dynamic learning-to-rank},
  author={Morik, Marco and Singh, Ashudeep and Hong, Jessica and Joachims, Thorsten},
  booktitle={Proceedings of the 43rd international ACM SIGIR conference on research and development in information retrieval},
  pages={429--438},
  year={2020}
}

@inproceedings{
jiang2025artificial,
title={Artificial Hivemind: The Open-Ended Homogeneity of Language Models (and Beyond)},
author={Liwei Jiang and Yuanjun Chai and Margaret Li and Mickel Liu and Raymond Fok and Nouha Dziri and Yulia Tsvetkov and Maarten Sap and Yejin Choi},
booktitle={The Thirty-ninth Annual Conference on Neural Information Processing Systems},
year={2025}
}

@inproceedings{yang2018mean,
  title={Mean field multi-agent reinforcement learning},
  author={Yang, Yaodong and Luo, Rui and Li, Minne and Zhou, Ming and Zhang, Weinan and Wang, Jun},
  booktitle={International conference on machine learning},
  pages={5571--5580},
  year={2018}
}

@article{wang2023survey,
  title={A survey on the fairness of recommender systems},
  author={Wang, Yifan and Ma, Weizhi and Zhang, Min and Liu, Yiqun and Ma, Shaoping},
  journal={ACM Transactions on Information Systems},
  volume={41},
  number={3},
  pages={1--43},
  year={2023},
  publisher={ACM New York, NY}
}

@inproceedings{dai2024bias,
  title={Bias and unfairness in information retrieval systems: New challenges in the llm era},
  author={Dai, Sunhao and Xu, Chen and Xu, Shicheng and Pang, Liang and Dong, Zhenhua and Xu, Jun},
  booktitle={Proceedings of the 30th ACM SIGKDD Conference on Knowledge Discovery and Data Mining},
  pages={6437--6447},
  year={2024}
}

@article{qadri2025risks,
  title={Risks of cultural erasure in large language models},
  author={Qadri, Rida and Davani, Aida M and Robinson, Kevin and Prabhakaran, Vinodkumar},
  journal={arXiv preprint arXiv:2501.01056},
  year={2025}
}

@inproceedings{ghosh2024generative,
  title={Do generative AI models output harm while representing non-Western cultures: Evidence from a community-centered approach},
  author={Ghosh, Sourojit and Venkit, Pranav Narayanan and Gautam, Sanjana and Wilson, Shomir and Caliskan, Aylin},
  booktitle={Proceedings of the AAAI/ACM Conference on AI, Ethics, and Society},
  volume={7},
  pages={476--489},
  year={2024}
}

@article{rusum2024vector,
  title={Vector Databases in Modern Applications: Real-Time Search, Recommendations, and Retrieval-Augmented Generation (RAG)},
  author={Rusum, Guru Pramod and Anasuri, Sunil},
  journal={International Journal of AI, BigData, Computational and Management Studies},
  volume={5},
  number={4},
  pages={124--136},
  year={2024}
}

@article{douze2025faiss,
  title={The faiss library},
  author={Douze, Matthijs and Guzhva, Alexandr and Deng, Chengqi and Johnson, Jeff and Szilvasy, Gergely and Mazar{\'e}, Pierre-Emmanuel and Lomeli, Maria and Hosseini, Lucas and J{\'e}gou, Herv{\'e}},
  journal={IEEE Transactions on Big Data},
  year={2025},
  publisher={IEEE}
}

@inproceedings{kishore2023incdsi,
  title={Incdsi: Incrementally updatable document retrieval},
  author={Kishore, Varsha and Wan, Chao and Lovelace, Justin and Artzi, Yoav and Weinberger, Kilian Q},
  booktitle={International conference on machine learning},
  pages={17122--17134},
  year={2023}
}

@article{gorban2020high,
  title={High-dimensional brain in a high-dimensional world: Blessing of dimensionality},
  author={Gorban, Alexander N and Makarov, Valery A and Tyukin, Ivan Y},
  journal={Entropy},
  volume={22},
  number={1},
  pages={82},
  year={2020}
}

@misc{quora2017,
  author = {Iyer, Shankar and Dandekar, Nikhil and Csernai, Kornél},
  title = {First {Quora} Dataset Release: Question Pairs},
  year={2017},
  howpublished = {\url{https://quoradata.quora.com/First-Quora-Dataset-Release-Question-Pairs}},
}

@article{krizhevsky2009learning,
  title={Learning multiple layers of features from tiny images},
  author={Krizhevsky, Alex and Hinton, Geoffrey and others},
  year={2009},
  publisher={Toronto, ON, Canada}
}

@misc{robischon2018wikipedia,
  author = {Robischon, Justin},
  title = {Wikipedia Movie Plots},
  year = {2018},
  publisher = {Kaggle},
  howpublished = {\url{https://www.kaggle.com/datasets/jrobischon/wikipedia-movie-plots}}
}

@incollection{lang1995newsweeder,
  title={Newsweeder: Learning to filter netnews},
  author={Lang, Ken},
  booktitle={Machine learning proceedings 1995},
  pages={331--339},
  year={1995},
  publisher={Elsevier}
}

@article{hu2024no,
  title={No free lunch: Retrieval-augmented generation undermines fairness in llms, even for vigilant users},
  author={Hu, Mengxuan and Wu, Hongyi and Guan, Zihan and Zhu, Ronghang and Guo, Dongliang and Qi, Daiqing and Li, Sheng},
  journal={arXiv preprint arXiv:2410.07589},
  year={2024}
}

@article{klimashevskaia2024survey,
  title={A survey on popularity bias in recommender systems},
  author={Klimashevskaia, Anastasiia and Jannach, Dietmar and Elahi, Mehdi and Trattner, Christoph},
  journal={User Modeling and User-Adapted Interaction},
  volume={34},
  number={5},
  pages={1777--1834},
  year={2024},
  publisher={Springer}
}

@inproceedings{taori2023data,
  title={Data feedback loops: Model-driven amplification of dataset biases},
  author={Taori, Rohan and Hashimoto, Tatsunori},
  booktitle={International conference on machine learning},
  pages={33883--33920},
  year={2023}
}

@inproceedings{wyllie2024fairness,
  title={Fairness feedback loops: training on synthetic data amplifies bias},
  author={Wyllie, Sierra and Shumailov, Ilia and Papernot, Nicolas},
  booktitle={Proceedings of the 2024 ACM Conference on Fairness, Accountability, and Transparency},
  pages={2113--2147},
  year={2024}
}
\bibliographystyle{icml2026}

%%%%%%%%%%%%%%%%%%%%%%%%%%%%%%%%%%%%%%%%%%%%%%%%%%%%%%%%%%%%%%%%%%%%%%%%%%%%%%%
%%%%%%%%%%%%%%%%%%%%%%%%%%%%%%%%%%%%%%%%%%%%%%%%%%%%%%%%%%%%%%%%%%%%%%%%%%%%%%%
% APPENDIX
%%%%%%%%%%%%%%%%%%%%%%%%%%%%%%%%%%%%%%%%%%%%%%%%%%%%%%%%%%%%%%%%%%%%%%%%%%%%%%%
%%%%%%%%%%%%%%%%%%%%%%%%%%%%%%%%%%%%%%%%%%%%%%%%%%%%%%%%%%%%%%%%%%%%%%%%%%%%%%%
\newpage
\appendix
\onecolumn

\section{Proofs}

In this appendix, we provide detailed mathematical derivations for the theorems presented in the main paper.

\subsection{Proof of Theorem \ref{thm:static_budgeted}: Minority Collapse via Shortlist Exclusion}

\begin{proof}
Let $\bx \in \mathcal{S}_{\text{min}}$ be a query from a minority user. The retrieval system returns a shortlist of size $L$. The target document $d^*$ corresponds to query $\bx$. A failure occurs if $L$ or more interfering documents from the majority class are closer to $\bx$ than $d^*$ is. 

We model the locations of the majority documents $\{\bd_j\}_{j=1}^{N_{\text{maj}}}$ as a realization of a non-homogeneous Poisson point process (PPP) with intensity function $\lambda(\bz) = N_{\text{maj}} \rho_{\text{maj}}(\bz)$.
Let $V(\bx)$ be the volume of the ball centered at $\bx$ with radius $\|\bx - \bd^*\|$. The number of majority documents falling within this ball, denoted by the random variable $X_{\text{maj}}$, follows a Poisson distribution:
\[
X_{\text{maj}} \sim \text{Poisson}(\Lambda(\bx)),
\]
where the rate parameter is the expected number of interferers:
\[
\Lambda(\bx) = \int_{V(\bx)} N_{\text{maj}} \rho_{\text{maj}}(\bz) d\bz = N_{\text{maj}} \cdot I(\bx)
\]
Here, $I(\bx)$ represents the integral of the majority density over the interference volume. Recall that $I_{\text{min}} := \inf_{\bx \in \mathcal{S}_{\text{min}}} I(\bx)$.

The target document is included in the shortlist if and only if fewer than $L$ interferers are present in the ball $V(\bx)$. Thus, the probability of inclusion is:
\begin{equation}
   P(\text{Inclusion}) = P(X_{\text{maj}} \le L-1) = \sum_{k=0}^{L-1} \frac{e^{-\Lambda(\bx)} (\Lambda(\bx))^k}{k!}  
\end{equation} 
We define the critical population threshold as $N_c := L/I_{\text{min}}$. Thus for $N_{\text{maj}} > N_c$, the expected number of interferers satisfies $\Lambda(\bx) \ge N_{\text{maj}} I_{\text{min}} > L$ for all $\bx$. Consider the regime where $\Lambda(\bx) = (1+\delta)L$ for some $\delta > 0$.
Since we are in the regime where the mean $\Lambda > L$, we are interested in the probability of the random variable $X_{\text{maj}}$ taking a value smaller than its mean (the left tail).
We apply the Chernoff bound for the left tail of a Poisson distribution. For a Poisson variable $K$ with mean $\lambda$, and for any $k < \lambda$:
$$ P(X \le k) \le \frac{e^{-\lambda}(e\lambda)^k}{k^k} $$
Substituting $\lambda = \Lambda(\bx)$ and $k = L-1$:
$$ P(X_{\text{maj}} \le L-1) \le \frac{e^{-\Lambda(\bx)}. (e \Lambda(\bx))^{L-1}}{(L-1)^{L-1}} $$
Taking the logarithm of the bound:
$$ \ln P(\text{Inclusion}) \le -\Lambda(\bx) + (L-1)(1 + \ln \Lambda(\bx) - \ln (L-1)) $$
Substituting $\Lambda(\bx) = N_{\text{maj}} I(\bx)$:
$$ \ln P(\text{Inclusion}) \le -N_{\text{maj}} I(\bx) + L \ln(N_{\text{maj}}) + c $$
where $c$ collects constant terms regarding $I(\bx)$ and $L$.

We now analyze the behavior as $N_{\text{maj}} \to \infty$ to show asymptotic collapse. The dominant term in the exponent is $-N_{\text{maj}} I(\bx)$. Since $I(\bx) \ge I_{\text{min}}$, we can bound this uniformly:
$$ P(\text{Inclusion}) \le \exp\left( -N_{\text{maj}} I_{\text{min}} + O(\ln N_{\text{maj}}) \right). $$
This confirms that the probability does not merely decay polynomially, but decays exponentially with respect to the population size of the majority.
Consequently,
$$ \lim_{N_{\text{maj}} \to \infty} \mathbb{E}[P_{success}] = 0. $$
This completes the proof of the catastrophic collapse.
\end{proof}

\subsection{Proof of Corollary \ref{cor:doc_rel}: Multi-Document Relevance}

\begin{proof}
Let $\epsilon_1 \le \epsilon_2 \le \dots \le \epsilon_T$ be the distances from the query to these $T$ targets. To fail, the system must push all $T$ targets out of the shortlist. Because the targets are ordered by distance, if the closest target (at distance $\epsilon_1$) is excluded by $L$ interferers, all targets further away ($\epsilon_i > \epsilon_1$) are strictly excluded by those exact same $L$ interferers. Therefore, the probability of complete failure (excluding all $T$ targets) reduces entirely to the probability of excluding the nearest target. 

Mathematically, increasing the number of ground-truth targets $T$ only shrinks the distance to the first-order statistic $\epsilon_1$, which marginally reduces the retrieval ball volume $V_d(\epsilon_1)$. This slightly increases the critical threshold $N_c$ and delays the start of the collapse. However, the dependence on $N_{maj}$ remains in the exponent. Once the new threshold $N_c$ is crossed, the probability of finding the nearest target (and thus any of the $T$ targets) decays exponentially, as bounded in Theorem \ref{thm:static_budgeted}.
\end{proof}

\subsection{Proof of Theorem \ref{thm:marginalization}: Emergent Marginalization}

\begin{proof}
The steady state $\rho_D^*(\bx)$ minimizes the free energy functional $\mathcal{F}[\rho_D]$ subject to the mass conservation constraint $\int \rho_D(\bx) d\bx = 1$. The first variation is given by:
$$\frac{\delta \mathcal{F}}{\delta \rho_D} = -CK \rho_G(\bx) e^{-C\rho_D(\bx)} + D(\ln \rho_D(\bx) + 1) = \mu$$
where $\mu$ is the Lagrange multiplier. This steady-state minimizes the free energy functional $\mathcal{F}[\rho_D]$ by balancing two forces: the Interaction Potential (pulling documents toward frequently queried regions) and Entropic Regularization (spreading documents out). In the limit where the diffusion coefficient $D \to 0$ (or equivalently, the learning pressure $\beta = CK/D \to \infty$, representing the practical regime where accuracy-driven drift dominates random exploration), the entropic term $D(\ln \rho_D + 1)$ becomes negligible for strictly positive densities. The equilibrium is determined by the balance between the interaction potential and the Lagrange multiplier $\mu$:
$$-CK \rho_G(\bx) e^{-C\rho_D^*(\bx)} \approx \mu$$
Here, the left side represents the ``attractive force" (the marginal retrieval gain), while the right side, $\mu$, represents the implicit system cost (the baseline representational cost).
Solving for $\rho_D^*(\bx)$:
$$e^{-C\rho_D^*(\bx)} = \frac{-\mu}{CK \rho_G(\bx)} \implies -C\rho_D^*(\bx) = \ln \left( \frac{-\mu}{CK \rho_G(\bx)} \right)$$
$$\rho_D^*(\bx) = \frac{1}{C} \ln \left( \frac{CK \rho_G(\bx)}{-\mu} \right)$$
Let $\tau = \frac{-\mu}{CK}$. Since document density must be non-negative, $\rho_D^*(\bx)$ is defined only where $\rho_G(\bx) > \tau$. In regions where the user goal density $\rho_G(\bx) \le \tau$, the attractive force is insufficient to support any document density against the implicit system costs (represented by $\mu$), clamping the solution to zero. Thus
$$\rho_D^*(\bx) = \max \left( 0, \frac{1}{C} \ln \left( \frac{\rho_G(\bx)}{\tau} \right) \right)$$
This confirms that minority interests below the critical density threshold $\tau$ suffer total representational collapse, i.e., $\rho_D^* = 0$.
\end{proof}

\section{Analytical Results for Gaussian Mixtures}
\label{app:example1}
We now derive exact, closed-form solutions for the minority success probability by instantiating the general framework of
Theorem~\ref{thm:static_budgeted} with a Gaussian Mixture Model. These results quantify the phase transition, providing explicit formulae that allow for the direct calculation of critical collapse thresholds.

\begin{proposition}[Success Probability in $d$-Dimensions]
\label{prop:success_prob}
Consider a $d$-dimensional embedding space where the majority goals follow an isotropic multivariate Gaussian distribution centered at the origin, with variance $\sigma^2$ along each dimension. The density is given by:
$$\rho_{\text{maj}}(\mathbf{x}) = \frac{1}{(2\pi\sigma^2)^{d/2}} \exp\left(-\frac{\|\mathbf{x}\|^2}{2\sigma^2}\right).$$
Let $V_d(\epsilon)$ be the volume of the retrieval ball and $N_{\text{maj}}$ be the majority population size. Defining the interference constant $C_{\text{maj}} \equiv N_{\text{maj}} V_d(\epsilon)$, the success probability for a single minority goal located at position $\mathbf{x}_{\text{min}} \in \mathbb{R}^d$ has the following closed form:
$$P_{\text{success}}(\mathbf{x}_{\text{min}}) = \exp\left( - \frac{C_{\text{maj}}}{(2\pi\sigma^2)^{d/2}} \exp\left(-\frac{\|\mathbf{x}_{\text{min}}\|^2}{2\sigma^2}\right) \right),$$
where $\|\mathbf{x}_{\text{min}}\|$ is the Euclidean distance of the minority goal from the center of the majority cluster.
\end{proposition}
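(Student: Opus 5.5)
We will derive the formula by substituting the Gaussian majority density into the mean-field success law \eqref{eq:static_success}. The only further ingredient is a small-ball approximation of the interference integral.

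\textbf{Step 1: Poisson void probability.} Model the majority document embeddings as a Poisson point process with intensity $\lambda_D(\bz)\approx N_{\text{maj}}\rho_{\text{maj}}(\bz)$. This is the same modelling choice used in the proof of Theorem~\ref{thm:static_budgeted}, specialised to $L=1$. The query succeeds if and only if the ball $B(\bx_{\text{min}},\epsilon)$ contains no majority interferer. The number of such points is Poisson with mean
\[
\Lambda(\bx_{\text{min}})=N_{\text{maj}}\int_{B(\bx_{\text{min}},\epsilon)}\rho_{\text{maj}}(\bz)\,d\bz ,
\]
so $P_{\text{success}}(\bx_{\text{min}})=\exp(-\Lambda(\bx_{\text{min}}))$. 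This is exactly the $k=0$ term of the inclusion sum in that proof.

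\textbf{Step 2: Local (small-$\epsilon$) evaluation.} In the regime where Eq.~\eqref{eq:static_success} was stated, $\epsilon$ is small compared with $\sigma$. We therefore treat $\rho_{\text{maj}}$ as constant over the ball:
\[
\int_{B(\bx_{\text{min}},\epsilon)}\rho_{\text{maj}}(\bz)\,d\bz=\rho_{\text{maj}}(\bx_{\text{min}})\,V_d(\epsilon)\,\bigl(1+O(\epsilon^2/\sigma^2)\bigr).
\]
To justify this, Taylor-expand $\rho_{\text{maj}}$ around $\bx_{\text{min}}$. The linear term integrates to zero over the ball by symmetry. The quadratic term is controlled by the Hessian of the Gaussian, which is bounded by $\rho_{\text{maj}}(\bx_{\text{min}})\cdot O(1+\|\bx_{\text{min}}\|^2/\sigma^2)/\sigma^2$. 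This yields the stated relative error.

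\textbf{Step 3: Substitution.} Multiply the local density by $N_{\text{maj}}V_d(\epsilon)=C_{\text{maj}}$ to get
\[
\Lambda=\frac{C_{\text{maj}}}{(2\pi\sigma^2)^{d/2}}\exp\!\left(-\frac{\|\bx_{\text{min}}\|^2}{2\sigma^2}\right).
\]
Exponentiating gives the claimed double-exponential formula. Isotropy of the Gaussian means the result depends on $\bx_{\text{min}}$ only through $\|\bx_{\text{min}}\|$, as the statement asserts.

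\textbf{Main obstacle.} The delicate point is Step 2. Read as an identity, the "closed form" is exact only in the mean-field sense, i.e.\ to leading order in $\epsilon/\sigma$. I would state this explicitly, as the same approximation already adopted in Eq.~\eqref{eq:static_success}. For a non-asymptotic version, I would bound the exact ball integral between the minimum and maximum of the density on the ball. These extremes are attained at distances $\|\bx_{\text{min}}\|\pm\epsilon$ from the origin, which gives two-sided bounds of the same double-exponential form. Both bounds collapse to the stated expression as $\epsilon\to0$.
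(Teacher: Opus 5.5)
Your proposal is correct and follows the paper's route. The paper gives no separate proof: the proposition comes from inserting $\lambda_D = N_{\text{maj}}\rho_{\text{maj}}$ into the Poisson void formula \eqref{eq:static_success}, which is your Steps 1 and 3. Your Taylor justification of the small-ball step and your remark that the ``closed form'' is exact only to leading order add rigor the paper leaves out.

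Two small corrections to those extras. The constant hidden in your $O(\epsilon^2/\sigma^2)$ depends on position: the relative error is of order $\frac{\epsilon^2}{\sigma^2}\bigl(1+\|\bx_{\text{min}}\|^2/\sigma^2\bigr)$, and the Hessian bound over the ball needs $\epsilon\|\bx_{\text{min}}\|/\sigma^2 = O(1)$. Also, the maximum of the density on the ball sits at distance $\max(\|\bx_{\text{min}}\|-\epsilon,0)$ from the origin, not at $\|\bx_{\text{min}}\|-\epsilon$.
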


Although high-dimensional spaces theoretically offer vast volume, empirical studies have shown that neural embeddings suffer from severe anisotropy or the {\it cone effect}, where representations collapse into a narrow sub-manifold \cite{gao2019representation}. This effectively reduces the usable capacity of the space, which makes collision inevitable even in high dimensions. Our dynamic analysis complements the work of \cite{wang2020understanding} by demonstrating that repeated cycles of training or fine-tuning of retrieval components from user feedback exacerbate this collapse. This generalized formula shows the crucial role that dimensionality plays in system performance. The fundamental double-exponential relationship between success probability and distance from the interfering cluster still holds. However, the new term $(2\pi\sigma^2)^{d/2}$ in the denominator grows exponentially with the dimension $d$. This implies that for a fixed number of users, their peak density at the center of the cluster is substantially lower in higher dimensions, as the embedding volume is much larger. This effect provides a {\it blessing of dimensionality} \cite{gorban2020high} in this context, which reduces the severity of interference at the core of the majority cluster and provides a theoretical justification for the use of high-dimensional embeddings in practice.

To build intuition for the geometric mechanism underlying goal collision, we consider a minimal but analytically transparent setting, where the embedding space is one-dimensional ($d = 1$) and has a heterogeneous user population. This example captures the essential interaction between population density, geometric crowding, and retrieval performance. Assume that user goals are drawn from a mixture of two Gaussian distributions,
\[
\rho_G(\bx) = (1 - \alpha)\,\mathcal{N}(\bx \mid \mu_1, \sigma_1^2)
           + \alpha\,\mathcal{N}(\bx \mid \mu_2, \sigma_2^2),
\]
where $\alpha \ll 0.5$ denotes the fraction of minority users. The majority population is concentrated around $\mu_1$ with small variance $\sigma_1^2$, yielding a high local density, and the minority population is centered at $\mu_2$ with larger variance $\sigma_2^2$, corresponding to a diffuse, low-density region of the embedding space. We assume that $|\mu_1 - \mu_2|$ is large enough that the two modes are initially well separated.

Consider first a majority user whose query embedding lies near the mode $\bx = \mu_1$. The local goal density $\rho_G(\mu_1)$ is high, and by Eq.~\eqref{eq:static_success}, the success probability
\[
P_{\text{success}}(\mu_1) \approx \exp\!\left(- M \rho_G(\mu_1) V_1(\epsilon)\right),
\]
may be substantially less than one due to frequent collisions. However, these collisions predominantly involve documents associated with \emph{other majority goals} that are nearby in the embedding space. As a result, although strict nearest-neighbor correctness may degrade, the retrieved documents often remain semantically aligned with the majority user’s broad intent. In this sense, majority users are relatively robust to geometric interference. In contrast, consider a minority user with a query near $\bx = \mu_2$. The intrinsic density contributed by the minority population, $\alpha \mathcal{N}(\mu_2 \mid \mu_2, \sigma_2^2)$, is low. However, the success probability at $\mu_2$ is not determined solely by this local minority density. Instead, it is governed by the \emph{total} goal density $\rho_G(\mu_2)$, which includes contributions from the tails of the majority distribution. Even when $\mu_2$ lies far from $\mu_1$, the exponentially decaying but non-zero tail of the majority Gaussian introduces interfering documents that compete with the minority target.

The mean-field perspective reveals an important systemic vulnerability. As the majority population grows, equivalently, as $(1-\alpha)M$ increases, the peak density $\rho_G(\mu_1)$ increases proportionally. To maintain acceptable retrieval performance in this high-density region, the embedding space must effectively resolve documents at increasingly smaller scales, corresponding to a reduction in the characteristic separation $\epsilon$ between targets and competitors. It is important to note that this constraint is \emph{global}, imposed by the densest region of the space and applies uniformly to all goals. For the minority population, whose typical inter-goal separation is governed by the larger variance $\sigma_2^2$, this global reduction in $\epsilon$ is catastrophic. The scale required to avoid collisions near $\mu_1$ becomes smaller than the natural spacing between minority targets and their nearest competitors. Consequently, the success probability at $\mu_2$ collapses sharply,
\[
P_{\text{success}}(\mu_2) \approx \exp\!\left(- M \rho_G(\mu_2) V_1(\epsilon)\right) \;\to\; 0,
\]
even though neither the minority population size nor its internal structure has changed.

This example highlights the insight from our analysis that retrieval performance for minority goals can deteriorate dramatically due to growth in an \emph{unrelated} majority population. The failure is not driven by changes local to the minority cluster, but by geometric pressure exerted elsewhere in the embedding space. The two-population Gaussian mixture example illustrates how goal collision induces population-dependent phase transitions in retrieval success.

\subsection{Sensitivity to Point Process Assumptions} 
\label{appx:ppp}
To address potential deviations from the PPP assumption, which implies a uniform background density, we compare our baseline model against a Clustered Gaussian Mixture Model ($G=10$) with fixed dimensionality ($d=64$) and shortlist size ($L=10$). The PPP baseline models a homogeneous point process with constant interference probability of $5\times10^{-5}$ per document. For the clustered model, each cluster is designated as near with probability $0.3$ (interference probability $0.012$) or far with probability $0.7$ (interference probability $10^{-7}$), reflecting heterogeneous spatial density around the minority query. The results, visualized in Figure~\ref{fig:synthetic_c}, demonstrate that clustering accelerates the retrieval collapse, with the critical population threshold dropping from $N_{\text{maj}} \approx 200,000$ in the PPP baseline to $N_{\text{maj}} \approx 5,000$ in the clustered regime. This acceleration indicates that retrieval failure is driven by the peak local density of specific clusters rather than the global average density, causing minority queries near these clusters to face crowding even when the total system population is relatively small. This confirms that the PPP assumption in Theorem~\ref{thm:static_budgeted} is in fact a conservative lower bound; in practice, real-world clustering reduces the effective capacity of the embedding space and amplifies the marginalization mechanism.

\begin{figure*}[t]
  \centering
  % ---------------- Row 1 ----------------
\begin{subfigure}[t]{\textwidth}
  \centering
  \begin{adjustbox}{center,width=0.95\textwidth}
    \resizebox{!}{5cm}{\input{figures/fig4_embedding_space.tex}}
  \end{adjustbox}
  \caption{Geometric Intuition}
  \label{fig:synthetic_a}
  \end{subfigure}
  % \vspace{1em}
  % ---------------- Row 2 ----------------    
  \begin{subfigure}[t]{0.41\textwidth}
    \centering
    \tikzset{every picture/.style={scale=1.0}}
    \resizebox{!}{5cm}{% This file was created with tikzplotlib v0.10.1.post13.
\begin{tikzpicture}

\definecolor{coral23612083}{RGB}{236,120,83}
\definecolor{darkgray176}{RGB}{176,176,176}
\definecolor{darkmagenta1060167}{RGB}{106,0,167}
\definecolor{gold25220637}{RGB}{252,206,37}
\definecolor{gray}{RGB}{128,128,128}
\definecolor{mediumvioletred18551136}{RGB}{185,51,136}

\begin{axis}[
legend cell align={left},
legend style={font=\tiny, fill opacity=0.8, draw opacity=1, text opacity=1, draw=none},
tick pos=left,
x grid style={darkgray176},
xlabel={Normalized Majority Population (\(\displaystyle N_{\text{maj}} / N_c\))},
xlabel style={font=\footnotesize}, 
% xmajorgrids,
xmin=0.33, xmax=1.87,
xtick style={color=black},
y grid style={darkgray176},
ylabel={Minority Success Probability},
ylabel style={font=\footnotesize}, 
% ymajorgrids,
ymin=-0.0499999999855848, ymax=1.04999999999931,
ytick style={color=black}
]
\addplot [line width=1pt, darkmagenta1060167]
table {%
0.4 0.99362919834041
0.423728813559322 0.9903945256516
0.447457627118644 0.986037029846908
0.471186440677966 0.980348564195433
0.494915254237288 0.973127796567167
0.51864406779661 0.964189546023724
0.542372881355932 0.955312424600945
0.566101694915254 0.942832097367028
0.589830508474576 0.92826815644163
0.613559322033898 0.911565738039419
0.63728813559322 0.892714083764072
0.661016949152542 0.875385682391149
0.684745762711864 0.852712906900976
0.708474576271186 0.828097023853049
0.732203389830508 0.80168587243267
0.755932203389831 0.773659068774429
0.779661016949153 0.749216881472575
0.803389830508475 0.718776311160598
0.827118644067797 0.687349001067782
0.850847457627119 0.655179776799668
0.874576271186441 0.622515560142681
0.898305084745763 0.595093179081398
0.922033898305085 0.56214819331041
0.945762711864407 0.529373851005002
0.969491525423729 0.496980157524298
0.993220338983051 0.465159130625835
1.01694915254237 0.4340828095752
1.04067796610169 0.408864184727233
1.06440677966102 0.379528981538618
1.08813559322034 0.35130879305604
1.11186440677966 0.324292610015894
1.13559322033898 0.298548621345244
1.15932203389831 0.278102683491424
1.18305084745763 0.254803876134625
1.20677966101695 0.232868500662855
1.23050847457627 0.21229760008657
1.25423728813559 0.193078753793427
1.27796610169492 0.17807879316893
1.30169491525424 0.161268921523035
1.32542372881356 0.145718298603323
1.34915254237288 0.131379067697502
1.3728813559322 0.118198005513704
1.39661016949153 0.10611790806383
1.42033898305085 0.0968489918280724
1.44406779661017 0.0866304733185453
1.46779661016949 0.0773396167789377
1.49152542372881 0.0689143396001075
1.51525423728814 0.0612933530923378
1.53898305084746 0.0555136588779027
1.56271186440678 0.0492130122192312
1.5864406779661 0.0435521641981571
1.61016949152542 0.0384775204078804
1.63389830508475 0.0339382371547854
1.65762711864407 0.0305297778491348
1.68135593220339 0.0268496156570656
1.70508474576271 0.0235767379782711
1.72881355932203 0.0206716911616979
1.75254237288136 0.0180979722400159
1.77627118644068 0.0161819559565098
1.8 0.0141303189930272
};
\addlegendentry{L=10 (\(\displaystyle N_c\)=245)}
\addplot [line width=1pt, mediumvioletred18551136]
table {%
0.4 0.999836696319587
0.423728813559322 0.999640138768324
0.447457627118644 0.99926289616584
0.471186440677966 0.998656651416884
0.494915254237288 0.997554489159381
0.51864406779661 0.995774558250396
0.542372881355932 0.993307425334603
0.566101694915254 0.989391002320894
0.589830508474576 0.984361509061947
0.613559322033898 0.976921537282732
0.63728813559322 0.967035297315823
0.661016949152542 0.955479354587093
0.684745762711864 0.939817568367257
0.708474576271186 0.922418984137716
0.732203389830508 0.899934710429767
0.755932203389831 0.873709084005887
0.779661016949153 0.846410308342456
0.803389830508475 0.813223335371819
0.827118644067797 0.776755174159732
0.850847457627119 0.74077609364919
0.874576271186441 0.699193715597286
0.898305084745763 0.659396138262717
0.922033898305085 0.6147037782558
0.945762711864407 0.569280110494301
0.969491525423729 0.527524438978839
0.993220338983051 0.482393757072884
1.01694915254237 0.438231432604314
1.04067796610169 0.399014700545358
1.06440677966102 0.357999956813231
1.08813559322034 0.32230489403981
1.11186440677966 0.285683795464227
1.13559322033898 0.251661510047139
1.15932203389831 0.222852831923401
1.18305084745763 0.194060940781577
1.20677966101695 0.168004454910679
1.23050847457627 0.146466879438193
1.25423728813559 0.125433293530006
1.27796610169492 0.108295018228973
1.30169491525424 0.0917862646048831
1.32542372881356 0.0773888119551437
1.34915254237288 0.0658882075162338
1.3728813559322 0.0550190676673631
1.39661016949153 0.0457202547607447
1.42033898305085 0.0384233921298089
1.44406779661017 0.0316441054672476
1.46779661016949 0.0263810657884964
1.49152542372881 0.0215415247288773
1.51525423728814 0.0175143492334121
1.53898305084746 0.0144345113136373
1.56271186440678 0.011643187868697
1.5864406779661 0.00952785867267854
1.61016949152542 0.00762746778378257
1.63389830508475 0.00608293811937855
1.65762711864407 0.00492741132230974
1.68135593220339 0.00390209714852787
1.70508474576271 0.00307921886903327
1.72881355932203 0.00247079246414263
1.75254237288136 0.00193702949333368
1.77627118644068 0.0015452052841901
1.8 0.00120384983904101
};
\addlegendentry{L=20 (\(\displaystyle N_c\)=490)}
\addplot [line width=1pt, coral23612083]
table {%
0.4 0.999999995491389
0.423728813559322 0.999999971059324
0.447457627118644 0.999999843598782
0.471186440677966 0.99999927483965
0.494915254237288 0.999997068891532
0.51864406779661 0.999989082074429
0.542372881355932 0.999965251661587
0.566101694915254 0.99990005452383
0.589830508474576 0.999737882485853
0.613559322033898 0.99936825656891
0.63728813559322 0.998590975676856
0.661016949152542 0.997073665492526
0.684745762711864 0.994309183097514
0.708474576271186 0.989585184743093
0.732203389830508 0.981980866866111
0.755932203389831 0.970404386765316
0.779661016949153 0.953677837780537
0.803389830508475 0.930665841579921
0.827118644067797 0.900431618325978
0.850847457627119 0.862394566432304
0.874576271186441 0.816459284024748
0.898305084745763 0.763089266112141
0.922033898305085 0.703308572815791
0.945762711864407 0.636332983876416
0.969491525423729 0.568549118010834
0.993220338983051 0.499840867765513
1.01694915254237 0.432233647234973
1.04067796610169 0.367565649775813
1.06440677966102 0.307355996925693
1.08813559322034 0.252723254495541
1.11186440677966 0.20435674242599
1.13559322033898 0.162534809083939
1.15932203389831 0.127178709483103
1.18305084745763 0.0979283017172774
1.20677966101695 0.0742261089452983
1.23050847457627 0.0553985907827396
1.25423728813559 0.0407267873635374
1.27796610169492 0.0295019945502083
1.30169491525424 0.0210651863800526
1.32542372881356 0.014831183452978
1.34915254237288 0.0102999819203816
1.3728813559322 0.00705829143592071
1.39661016949153 0.00470936025810373
1.42033898305085 0.00314406271002731
1.44406779661017 0.0020733399258209
1.46779661016949 0.00135095545178056
1.49152542372881 0.000870042269318397
1.51525423728814 0.000553990554854687
1.53898305084746 0.000348864774340563
1.56271186440678 0.000217335591247568
1.5864406779661 0.000133981838908282
1.61016949152542 8.17562420108587e-05
1.63389830508475 4.93935910772774e-05
1.65762711864407 2.95533059534568e-05
1.68135593220339 1.75159944469718e-05
1.70508474576271 1.0286336728225e-05
1.72881355932203 5.98665738295519e-06
1.75254237288136 3.4538461878938e-06
1.77627118644068 1.97565211835434e-06
1.8 1.120721499687e-06
};
\addlegendentry{L=50 (\(\displaystyle N_c\)=1224)}
\addplot [line width=1pt, gold25220637]
table {%
0.4 1
0.423728813559322 0.999999999999993
0.447457627118644 0.999999999999815
0.471186440677966 0.999999999996183
0.494915254237288 0.999999999940221
0.51864406779661 0.999999999238871
0.542372881355932 0.999999992599431
0.566101694915254 0.9999999413305
0.589830508474576 0.999999613978889
0.613559322033898 0.999997858844008
0.63728813559322 0.999989848364363
0.661016949152542 0.999958348388452
0.684745762711864 0.99985046641479
0.708474576271186 0.999525552215544
0.732203389830508 0.998634521874323
0.755932203389831 0.996531814478855
0.779661016949153 0.992017609117341
0.803389830508475 0.983234176377856
0.827118644067797 0.967653967987634
0.850847457627119 0.942324046285143
0.874576271186441 0.904394620690744
0.898305084745763 0.851846568626371
0.922033898305085 0.784210973359892
0.945762711864407 0.701522333510587
0.969491525423729 0.610179814006963
0.993220338983051 0.5139252174589
1.01694915254237 0.418389153065025
1.04067796610169 0.328824716499425
1.06440677966102 0.249306432628551
1.08813559322034 0.182283164401851
1.11186440677966 0.1285298246332
1.13559322033898 0.0874208556972442
1.15932203389831 0.0573815163803676
1.18305084745763 0.0360729408820566
1.20677966101695 0.0220789431099057
1.23050847457627 0.0130678052951551
1.25423728813559 0.00748516149085708
1.27796610169492 0.00415271239022502
1.30169491525424 0.00223335705048754
1.32542372881356 0.00116531798318036
1.34915254237288 0.000590408437136013
1.3728813559322 0.000290696154319556
1.39661016949153 0.00013741598087509
1.42033898305085 6.40211797584217e-05
1.44406779661017 2.90548720285059e-05
1.46779661016949 1.28542330529384e-05
1.49152542372881 5.54776938556041e-06
1.51525423728814 2.33743045631383e-06
1.53898305084746 9.62055251065405e-07
1.56271186440678 3.87066234200662e-07
1.5864406779661 1.5232374486779e-07
1.61016949152542 5.77013474861222e-08
1.63389830508475 2.17564928351963e-08
1.65762711864407 8.03813903802487e-09
1.68135593220339 2.91153385563445e-09
1.70508474576271 1.03446402504919e-09
1.72881355932203 3.60707755483814e-10
1.75254237288136 1.23496187874476e-10
1.77627118644068 4.15349294138138e-11
1.8 1.37287770043423e-11
};
\addlegendentry{L=100 (\(\displaystyle N_c\)=2449)}
\addplot [semithick, gray, opacity=0.6, dashed, forget plot]
table {%
1 -0.0499999999855848
1 1.04999999999931
};
\end{axis}

\end{tikzpicture}}
    \caption{Shortlist Size Effect}
    \label{fig:synthetic_b}
  \end{subfigure}
  \begin{subfigure}[t]{0.4\textwidth}
    \centering
    \tikzset{every picture/.style={scale=1.0}}
     \resizebox{!}{5cm}{% \begin{axis}[
% legend cell align={left},
% legend style={font=\tiny, fill opacity=0.8, draw opacity=1, text opacity=1, draw=none},
% tick pos=left,
% x grid style={darkgray176},
% xlabel={Normalized Majority Population (\(\displaystyle N_{maj} / N_c\))},
% xlabel style={font=\footnotesize}, 
% % xmajorgrids,
% xmin=0.33, xmax=1.87,
% xtick style={color=black},
% y grid style={darkgray176},
% ylabel={Minority Success Probability},
% ylabel style={font=\footnotesize}, 
% % ymajorgrids,
% ymin=-0.0499999999855848, ymax=1.04999999999931,
% ytick style={color=black}
% ]
% \addplot [line width=1pt, darkmagenta1060167]

\begin{tikzpicture}
\begin{semilogxaxis}[
    legend style={font=\tiny, fill opacity=0.8, draw opacity=1, text opacity=1, draw=none},
    xlabel={Number of Majority Documents ($N_{\mathrm{maj}}$)},
    xlabel style={font=\footnotesize}, 
    ylabel={Minority Success Probability (Recall@1)},
    ylabel style={font=\footnotesize}, 
    xmin=70.0,
    xmax=1500000.0,
    ymin=-0.02,
    ymax=1.05,
    legend pos=north east,
    grid style={line width=.1pt, draw=gray!30},
    major grid style={line width=.2pt,draw=gray!50},
]

% Theory (Poisson CDF)
\addplot[red, thick, mark=none] coordinates {
    (100, 1.0000)
    (500, 1.0000)
    (1000, 1.0000)
    (2000, 1.0000)
    (5000, 1.0000)
    (10000, 1.0000)
    (20000, 1.0000)
    (50000, 0.9997)
    (100000, 0.9663)
    (200000, 0.4453)
    (350000, 0.0183)
    (500000, 0.0002)
    (750000, 0.0000)
    (1000000, 0.0000)
};
\addlegendentry{Theory (Eq. 3, $L=10$)}

% PPP/Uniform
\addplot[blue, thick, dashed, mark=o, mark options={fill=white}] coordinates {
    (100, 1.0000)
    (500, 1.0000)
    (1000, 1.0000)
    (2000, 1.0000)
    (5000, 1.0000)
    (10000, 1.0000)
    (20000, 1.0000)
    (50000, 1.0000)
    (100000, 0.9670)
    (200000, 0.4447)
    (350000, 0.0220)
    (500000, 0.0000)
    (750000, 0.0000)
    (1000000, 0.0000)
};
\addlegendentry{Empirical: PPP/Uniform}

% Clustered
\addplot[green!60!black, thick, mark=square, mark options={fill=white}] coordinates {
    (100, 1.0000)
    (500, 0.9997)
    (1000, 0.9597)
    (2000, 0.7023)
    (5000, 0.1990)
    (10000, 0.0587)
    (20000, 0.0333)
    (50000, 0.0333)
    (100000, 0.0333)
    (200000, 0.0333)
    (350000, 0.0333)
    (500000, 0.0333)
    (750000, 0.0333)
    (1000000, 0.0333)
};
\addlegendentry{Empirical: Clustered ($G=10$)}

\end{semilogxaxis}
\end{tikzpicture}}
    \caption{Shortlist bottleneck}
    \label{fig:synthetic_c}
  \end{subfigure}
  % \begin{subfigure}[t]{0.4\textwidth}
  %   \centering
  %   \tikzset{every picture/.style={scale=1.0}}
  %    \resizebox{!}{5cm}{\input{figures/fig2_dimensionality_scaling.tex}}
  %   \caption{Dimensionality Scaling}
  %   \label{fig:synthetic_c}
  % \end{subfigure}
  \hspace{0.4cm}
  \caption{ \textbf{Geometric Intuition and Clustering Effects.}
  (\subref{fig:synthetic_a}) Visualization of goal collision: As majority density increases, the retrieval ball around a minority query fills with interfering documents that displace the target.
  (\subref{fig:synthetic_b}) Increasing the shortlist budget $L$ sharpens the phase transition but does not prevent it.
  (\subref{fig:synthetic_c}) Impact of clustering: Comparison of the theoretical PPP baseline (blue) against a realistic clustered Gaussian Mixture (green). Clustering creates pockets of super-critical density that cause the system to collapse at significantly lower population sizes $N_{\text{maj}}$, than uniform density models predict. This result confirms that our theoretical bounds are conservative.}
  %   (\subref{fig:synthetic_c})  Impact of shortlist size on minority recovery. While increasing $L$ resolves retrieval bottlenecks (solid red), it fails to rescue downstream ranking (dashed red) due to crowding-induced hard negatives.
  % }
  % \label{fig:synthetic_validation}
\end{figure*}

\section{Derivation of the Interaction Potential}
In the main paper, we introduced the interaction potential $\mathcal{V}[\rho_D] = \int \rho_G(\bx) e^{-C \rho_D(\bx)} dx$ as the driver of the drift dynamics. Here, we show that this functional is not an ad-hoc modeling choice, but rather the thermodynamic limit of a standard microscopic retrieval objective, maximizing system-wide recall (or equivalently, minimizing the retrieval failure rate).

Consider a discrete retrieval system consisting of $M$ document embeddings $\mathcal{D}_M = \{d_1, \dots, d_M\} \subset \mathbb{R}^d$. For a given user query $q \in \mathbb{R}^d$ drawn from the goal distribution $\rho_G$, a retrieval event is considered successful if at least one document falls within a retrieval radius $\epsilon$ of the query. Conversely, a retrieval failure occurs if the query falls into a void where no documents are present. We define the microscopic loss function $\ell(q, \mathcal{D}_M)$ as the binary indicator of retrieval failure:$$\ell(q, \mathcal{D}_M) = \mathbb{I}\left( \bigcap_{j=1}^M \{ \|d_j - q\| > \epsilon \} \right).$$ The total system risk $\mathcal{R}_M$ is the expected failure rate averaged over the population of user goals:$$\mathcal{R}_M = \mathbb{E}_{q \sim \rho_G} \left[ \mathbb{E}_{\mathcal{D}_M} [ \ell(q, \mathcal{D}_M) ] \right].$$

We consider the standard mean-field scaling limit to derive the macroscopic field equation. We assume the document embeddings are exchangeable and become asymptotically independent as $M \to \infty$, distributed according to the smooth density $\rho_D(\bx)$. The probability that a single specific document $d_j$ falls within the retrieval ball $B_\epsilon(q)$ is given by the mass of the ball
$$p_\epsilon(q) = \int_{B_\epsilon(q)} \rho_D(z) dz \approx \text{Vol}(B_\epsilon) \rho_D(q).$$The probability that all $M$ documents fail to match the query (the probability of a void) is therefore:$$P(\text{Void} | q) = (1 - p_\epsilon(q))^M = \left( 1 - \text{Vol}(B_\epsilon) \rho_D(q) \right)^M.$$We now take the limit where the number of documents $M \to \infty$ and the retrieval volume $\text{Vol}(B_\epsilon) \to 0$ such that the effective system capacity $C \equiv M \cdot \text{Vol}(B_\epsilon)$ remains constant. This scaling reflects a dense retrieval regime where the index size grows while the required precision sharpens. We can write
$$\lim_{M \to \infty} \left( 1 - \frac{C \rho_D(q)}{M} \right)^M = e^{-C \rho_D(q)}.$$

Substituting the asymptotic void probability back into the system risk, we obtain the continuous loss functional:$$\mathcal{R}[\rho_D] = \int_{\mathcal{S}} \rho_G(q) e^{-C \rho_D(q)} dq.$$This functional is identical (up to constants) to the interaction potential $\mathcal{V}[\rho_D]$.

\section{Extended Discussion on Hubness and Dimensionality}
\label{app:hubness}
In the main paper, we alluded to the role of geometric pathologies in high-dimensional embedding spaces. Here we elaborate on how hubness and anisotropy interact with the mean-field mechanisms studied in this paper and contribute to population-dependent retrieval collapse in RAG systems.

High-dimensional spaces are often assumed to be sparse; however, empirical studies of neural embeddings show marked deviations from this intuition. Nearest-neighbor relations in such spaces exhibit pronounced \emph{hubness}, in which a small number of points appear disproportionately often in neighbor lists \citep{tomasev2014hubs, radovanovic2010hubs}. Additionally, transformer-based embeddings are known to be highly anisotropic, with representations concentrated in narrow regions of the ambient space \citep{ethayarajh2019contextual, godey2024anisotropy}. Both effects effectively increase local density and exacerbate geometric crowding. From a geometric perspective, hubness emerges alongside concentration of measure. As the embedding dimension $d$ increases, the contrast between pairwise distances diminishes. At the same time, the distribution of $k$-occurrences, the number of times a point appears in the $k$-nearest-neighbor lists of other points, becomes highly skewed. A small number of points act as hubs that dominate retrieval neighborhoods, while the majority of points become {\it anti-hubs} and are rarely, if ever, retrieved. Our mean-field formulation provides a continuous approximation of this otherwise discrete phenomenon. In particular:
\begin{itemize}
    \item High-density majority regions in the goal distribution act as statistical hubs in the embedding space.
    \item Documents associated with minority goals effectively become anti-hubs as their neighborhoods are increasingly populated by majority interferers.
    \item Equation~(\ref{eq:static_success}), $P_{\text{success}} \approx \exp(-M \rho V)$, can be interpreted as governing a document's effective hubness: documents in high-density regions experience an exponentially reduced probability of being the nearest neighbor to a specific query due to geometric crowding within the retrieval volume $V$.
\end{itemize}

This suggests that fairness in RAG is not just about training data bias. Even if the training data is perfectly balanced, the geometry of the inference time embedding space will induce unfairness via hubness. Popular topics will geometrically dominate the retrieval lists, pushing niche topics out of the retrieval shortlist used by RAG systems.

\section{Connection Between Mean-Field Dynamics and Practical Learning Algorithms}\label{app:learning_dynamics}
In this appendix, we demonstrate how the population-level dynamics in Section~\ref{sec::dynamic_marg} emerge directly from standard embedding update rules used in retrieval systems.

Let $d \in \mathbb{R}^d$ denote the embedding of a single document, and let $q \in \mathbb{R}^d$ denote a query embedding sampled from the goal distribution $\rho_G$. During learning, document embeddings are updated via stochastic gradient-based rules of the form
\begin{equation}\Delta d \equiv d_{t+1} - d_t \propto -\nabla_d \ell(d, q; \rho_D) + \xi_t ,\label{eq:delta_d_update}\end{equation}
where $\ell(d,q; \rho_D)$ is a retrieval loss that depends on the document population (e.g., via contrastive normalization or negative sampling), and $\xi_t$ denotes stochastic noise originating from minibatching, sampling, or optimizer randomness. The negative gradient term $-\nabla_d \ell$ induces a drift of document embeddings toward regions of the embedding space frequently queried by users, while the noise term induces diffusion. Importantly, the query $q$ is drawn from the population-level goal distribution $\rho_G$, which is generally skewed. In the limit of small step sizes, frequent updates, and a large number of documents, the empirical distribution of document embeddings converges to a continuous density $\rho_D(\bx,t)$ whose evolution is governed by a nonlinear drift--diffusion (McKean--Vlasov) equation. The expected drift $\mathbb{E}_{q \sim \rho_G}[-\nabla_d \ell(d,q; \rho_D)]$ recovers the negative gradient of the interaction potential $-\nabla (\delta \mathcal{V} / \delta \rho_D)$ defined in the macroscopic model. This yields Eq.~(\ref{eq:pde}), where the drift field corresponds to the expected gradient-induced motion modulated by competitive density effects, and the diffusion term captures the aggregate effect of stochasticity in the updates. The emergent behavior results solely from the interaction between population skew in $\rho_G$, shared embedding geometry, and feedback-driven learning.

The continuum approximation becomes appropriate when the number of documents is large and embedding updates are frequent. In the early stages of a system's lifecycle (small index, sparse embeddings), discrete stochastic effects and initialization noise dominate. However, as retrieval-augmented agents are deployed to a continuous user base and the index scales alongside high-frequency feedback loops, the system enters the thermodynamic limit. Empirically, our experiments suggest this macroscopic regime becomes highly relevant at relatively modest deployment scales. We observe density-induced collapse beginning at index sizes as small as $N_{maj} \approx 2000$ to $25000$ depending on the modality, and dynamic marginalization taking over after roughly $t \approx 1500$ simulated update steps. It is precisely in this mature, continuously updating regime that the macroscopic drift-diffusion dynamics—governed by the non-linear Fokker-Planck equation—become irreversible.

\end{document}